\documentclass[11pt]{article}
\usepackage[paperwidth=8.5in, paperheight=11in, top=1.25in, bottom=1.25in, left=1.00in, right=1.00in]{geometry}

\usepackage{imakeidx} 
	\makeindex
\usepackage[dvipsnames]{xcolor} 
\usepackage[plainpages=false, pdfpagelabels]{hyperref} 
 	\hypersetup{
 		colorlinks   = true,
 		citecolor    = blue,
 		linkcolor    = Maroon,
 		urlcolor     = Turquoise
 		}
\usepackage{natbib}

\usepackage{amssymb,amsmath,amsthm,mathtools} 
\mathtoolsset{showonlyrefs=true}
\usepackage{ragged2e}
\usepackage[justification=RaggedRight]{caption}
\usepackage[all,error]{onlyamsmath} 
\usepackage{empheq} 
\usepackage{graphicx}

\newcommand{\mockalph}[1]{}

\renewcommand{\P}{\mathbb{P}}

\newcommand{\E}{\mathbb{E}}

\newcommand{\R}{\mathbb{R}}

\def\ud{\mathrm{d}}

\def\ind{{\mathchoice{1\mskip-4mu\mathrm l}{1\mskip-4mu\mathrm l}
{1\mskip-4.5mu\mathrm l}{1\mskip-5mu\mathrm l}}}

\newtheorem{theorem}{Theorem}[section]
\newtheorem{definition}[theorem]{Definition}
\newtheorem{corollary}[theorem]{Corollary}
\newtheorem{proposition}[theorem]{Proposition}
\newtheorem{remark}[theorem]{Remark}

\title{When to Sell an Asset? -- A Distribution Builder Approach\footnote{This article has been written during Stephan Sturm's sabbatical leave at NYU, hosted by Peter Carr. All content was finished before Peter Carr’s untimely passing; Stephan Sturm edited the manuscript only minimally for publication and would like to thank Sixian Jin for helpful feedback. AI (GPT 5.6 Sol) was used for a final error and typo check; Stephan Sturm alone is responsible for all remaining mistakes.}}

\author{
	Peter Carr ($\dagger$)\thanks{%
		P. Carr, New York University, Tandon School of Engineering, Department of Finance and Risk Engineering, One MetroTech Center, Brooklyn, NY 11201}
	\and
	Stephan Sturm \thanks{ %
		S. Sturm, Worcester Polytechnic Institute, Department of Mathematical Sciences, 100 Institute Road, Worcester, MA 01609, USA
		(e-mail: \texttt{ssturm@wpi.edu})}
}

\begin{document}
	
\maketitle

\begin{abstract}
	We consider the question of the optimal timing of the sale of an asset with stochastic dynamics. Our analysis is based on the method of the distribution builder introduced by Sharpe, Goldstein and Blythe \cite{SGB00} for the purpose of optimal portfolio selection. Instead of specifying a utility function or risk aversion coefficient, this tool directly elicits the target distribution of the investor. We show how the problem of an optimal asset sale is in this setting linked to the problem of finding a Skorokhod embedding of a distribution into a diffusion process. In the case where the asset process follows a geometric Brownian motion and a specific family of distributions is targeted, one can observe a risk-return tradeoff. 
\end{abstract}

\vspace{5mm}

\begin{flushleft}
	\textbf{Keywords:} Asset Sale, Optimal Stopping, Skorokhod Embedding, Distribution Builder.\\
	\textbf{Mathematics Subject Classification (2020):} 60G40, 60J70, 91G10.\\
	\textbf{JEL classification:} C02, G11, D81, C61.
\end{flushleft}

\vspace{1cm}

\section{Introduction}
We consider the problem of the optimal timing of the sale of an asset. This problem has been considered in different versions by several authors. For example, Shiryaev, Xu and Zhou as well as du Toit and Peskir \cite{SXZ08, dTP09} consider a universal (i.e., preference-independent) setting and find the stopping time at which the price is closest to the ultimate maximum price on a given time horizon. Leung and Wang \cite{LW19} consider the problem of finding the stopping time that maximizes the expected utility of the asset sale.

In the following we take an approach that takes into account the seller's preferences, but avoids the use of utility functions that are hard to estimate in practice. Following the distribution builder approach of Sharpe and co-authors \cite{S07, GJS08, SGB00}, we assume that the seller has a target distribution in mind which she wants to achieve for the (discounted) value $X$  of the asset she wishes to sell. The goal is to determine if this target distribution can be actually achieved, and if so how a stopping time can be determined that leads to the desired distribution.

The paper is structured as follows. In Section \ref{sec:set} we lay out the setting and provide a precise mathematical formulation of the problem. Before considering the general problem, we investigate in detail in Section \ref{sec:GBM} the case in which the asset dynamics follow a geometric Brownian motion. Not only is this the standard set-up in the literature for optimal asset sales, but we can without much technicality show the main features of our approach. We investigate the case of particular parametrized families of target distributions (such as log-normal, Pareto and Weibull) and show that in this context the optimal choice of the target distribution amounts to considering a risk--return tradeoff similar to the efficient frontier in Markowitz portfolio optimization. Section \ref{sec:gen} provides the general analysis for the case of (non-singular) diffusion processes as models for the asset price process. Section \ref{sec:conc} concludes; the appendix provides further examples.

\section{Setting}\label{sec:set}

We consider a (discounted) asset price process described by diffusion dynamics
\begin{equation}
	\ud X_t = \mu(X_t) \,\ud t + \sigma(X_t) \, \ud W_t, \qquad X_0 =x.
\end{equation}
The asset is assumed to be liquid and indivisible. The owner of the asset is interested in selling the asset, and she wants to do so in a way that is optimal according to her preferences. The specification of the preferences is done according to the methodology of the distribution builder (see \cite{S07, GJS08, SGB00}): The seller is able to choose a distribution (identified with its cumulative distribution function, or cdf, $F$) with the distribution builder tool. The engine behind the distribution builder should provide information about whether the distribution given can actually be reached by selling the asset at some random (stopping) time, providing a strategy that can be implemented for the sale. It might additionally provide insight into whether the distribution specified is actually optimal, or if there is a way to actually do better.

Specifically: The seller provides a distribution $F$ with the goal to sell the asset at the stopping time $\tau$ such that 
\begin{equation}\label{dist}
X_\tau \sim F
\end{equation}
(where $\sim$ stands for equality in distribution). This might be generalized: in some cases it might not be possible to attain the exact distribution desired, but one might be able to achieve a distribution that is even better, i.e.,
\begin{equation}\label{dist-sup}
X_\tau \succeq F,
\end{equation}
where $\succeq$ denotes first-order stochastic dominance (i.e., $F(z) \geq \P[X_\tau \leq z ]$ for all $z \in \mathbb{R}$). This can be seen as a way to allow the seller to 'throw money away' (a random, non-negative amount), similar to the concept of superhedging generalizing the idea of hedging. 

\begin{definition}
	Let $F$ be a distribution. We say $F$ is \emph{attainable} if \eqref{dist} holds true, and we call $F$ \emph{super-attainable} if \eqref{dist-sup} does.
\end{definition}

In the following we aim to answer the following questions key to a successful understanding and implementation of the distribution-builder framework for the sales-timing problem:
\begin{enumerate}
	\item Which distributions $F$ are \textit{attainable} (resp. \textit{super-attainable}), i.e., for which there exists an a.s. finite stopping time $\tau$ such that \eqref{dist} (resp. \eqref{dist-sup}) holds true?
	\item Can the stopping time $\tau$ be characterized in a meaningful way such that the liquidation policy yielding \eqref{dist} (resp. \eqref{dist-sup}) can be readily implemented?
	\item What can we say about the distribution of the stopping time $\tau$? Which further properties does it possess?
\end{enumerate}

\section{Motivational Example: Geometric Brownian Motion}\label{sec:GBM}

To fix ideas, let us assume that the discounted asset has geometric Brownian motion dynamics

\[
\ud X_t = (\mu -r) X_t \, \ud t + \sigma X_t \, \ud W_t, \qquad X_0 = x,
\] 
where the discounting rate $r$ reflects the time preferences of the asset seller and does not correspond necessarily to a riskless interest rate -- in general we want to emphasize that the current paper involves no hedging arguments at all.

In this setting we can answer the first two questions asked above in the affirmative, and the answers rely on the Az\'{e}ma--Yor (barrier-type) solution to the Skorokhod embedding problem for Brownian motion with drift as described in Grandits and Falkner \cite{GF00}; see also Ob\l oj \cite{Obl04}. Specifically, the explicit answers are:

\begin{theorem}
	A distribution $F$ supported on $\mathbb{R}_{>0}$ is attainable if and only if it satisfies the moment condition
	\begin{equation}\label{moment}
	\int \Bigl(\frac{z}{x}\Bigr)^A \, F(\ud z) \leq 1, \qquad A = 1-\frac{2(\mu-r)}{\sigma^2}.
	\end{equation}
\end{theorem}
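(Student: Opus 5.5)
The plan is to reduce the problem to a Skorokhod embedding for a (local) martingale via the scale function. Set $\nu = \mu - r$. For geometric Brownian motion the scale function is $s(z) = z^{A}$ with $A = 1 - 2\nu/\sigma^2$, since $\frac{\sigma^2}{2} z^2 s'' + \nu z s' = 0$ is solved by $z^{A}$. We can also write $\log X_t = \log x + (\nu - \sigma^2/2)t + \sigma W_t$, which makes $(X_t/x)^{A} = \exp\bigl(A\sigma W_t - \tfrac12 A^2\sigma^2 t\bigr)$ because $A(\nu-\sigma^2/2) = -A^2\sigma^2/2$. So $M_t := (X_t/x)^A$ is a positive exponential martingale started at $1$. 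The degenerate case $A = 0$, i.e.\ $\nu = \sigma^2/2$, has to be treated separately: there $\log X$ is a driftless Brownian motion, the condition reads $\int F(\ud z) \le 1$ and holds trivially, and every distribution on $\R_{>0}$ is attainable by recurrence of Brownian motion (for instance via the Az\'ema--Yor embedding or a simple randomized hitting construction).

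\emph{Necessity.} Suppose $\tau$ is a.s.\ finite and $X_\tau \sim F$. Since $M$ is a nonnegative martingale, Fatou's lemma applied to the stopped martingale $M_{\tau\wedge n}$ gives $\E[M_\tau] \le \liminf_n \E[M_{\tau\wedge n}] = 1$. This is exactly $\int (z/x)^A F(\ud z) \le 1$.

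\emph{Sufficiency.} The key equivalence is that $X_\tau \sim F$ if and only if $M_\tau \sim G$, where $G$ is the image of $F$ under $z \mapsto (z/x)^A$ (a bijection of $\R_{>0}$ when $A \neq 0$). So the task becomes: embed a distribution $G$ on $(0,\infty)$ with mean $m \le 1$ into the positive martingale $M$, which is a time-changed Brownian motion $B$ started at $1$ and run until it hits $0$ (it never reaches $0$ in finite time). I would invoke the Az\'ema--Yor / Grandits--Falkner construction for Brownian motion with drift, as cited above.
\begin{itemize}
\item If $m = 1$, the standard Az\'ema--Yor stopping time $\tau = \inf\{t : M_t \le \psi^{-1}(\max_{s\le t} M_s)\}$, with barycenter function $\psi(y) = \E_G[Y \mid Y \ge y]$, embeds $G$. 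It is a.s.\ finite because $M_t \to 0$ while the maximum stays bounded.
\item If $m < 1$, there is mass ``lost at $0$''. I would first run $M$ until it exits an interval, or use the Az\'ema--Yor construction for the measure $G + (1-m)\delta_0$, which has mean $1$. The key observation is that the barrier construction never needs to stop at $0$ when $G$ has no atom at $0$: the stopping rule is a barrier $M_t \le b(\max M)$ with $b > 0$ on the relevant range.
\end{itemize}
The event of stopping at $0$ then corresponds to paths where $M$ never reaches the barrier and $M_t \to 0$. Translating back via $s$, these are exactly paths where $\tau = \infty$. A finite stopping time would therefore require that event to have probability $0$.

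This is the main obstacle. With $m < 1$, the Az\'ema--Yor barrier stops only with probability $m$-weighted total, so finiteness fails. My fix is to exploit the fact that $M$ is a strict martingale on an infinite horizon, with $M_t \to 0$ a.s. when $A \neq 0$. One can then let the process first drift down before embedding: choose a deterministic or hitting time $\rho$ such that $M_\rho$ has a law with mean $m$ (for example, by stopping at the first hitting of a level combined with an independent randomization, or by noting that $\E[M_{\rho}] $ can be made any value in $(0,1]$ by choosing $\rho = \inf\{t: M_t = c\}\wedge$ suitable). Next, embed $G$ from the starting law of $M_\rho$ via a general Az\'ema--Yor-type embedding for a nontrivial initial law with matching mean and convex-order domination. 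The delicate part is checking the domination in convex order, or equivalently the potential-function condition $U_{G} \le U_{\text{law}(M_\rho)}$. I expect to handle this by choosing $M_\rho = m$ deterministically: stop when $M$ first hits level $m < 1$, which happens a.s.\ since $M_t \to 0$. From the point mass at $m$, any $G$ with mean $m$ is embeddable in finite time by the standard Az\'ema--Yor construction, because from level $m$ the martingale restarted is again uniformly integrable up to the barrier time, and the barrier time is finite a.s. Composing the two stopping times yields the required $\tau$.
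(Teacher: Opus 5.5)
Your proof is correct, but it takes a different route from the paper. The paper gives no direct argument. It treats the statement as a special case of Theorem~\ref{thm:sko-gen} (Pedersen--Peskir with the Cox--Hobson extension). Implicitly, that means computing the scale function $\varsigma(z)=x\bigl((z/x)^A-1\bigr)/A$ (resp.\ $x\log(z/x)$ for $A=0$) and splitting into cases:
\begin{itemize}
\item $A>0$ falls under case ii);
\item $A<0$ falls under case iii), where $m_F^\varsigma\ge 0$ becomes \eqref{moment} after dividing by $A<0$;
\item $A=0$ falls under case i).
\end{itemize}

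You instead work with the positive exponential martingale $(X_t/x)^A=\exp\bigl(A\sigma W_t-\tfrac12A^2\sigma^2t\bigr)$, which tends to $0$ for every $A\neq 0$. This lets one argument cover both signs of $A$. Necessity is a one-line Fatou bound. Sufficiency is an explicit two-stage rule. First wait until $(X/x)^A$ hits $m=\int (z/x)^A\,F(\ud z)$. Then run the Az\'ema--Yor embedding of $G$ for the restarted martingale. That embedding stops strictly before the time-changed Brownian motion reaches $0$, because the barrier is strictly positive once the running maximum exceeds the starting level.

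Your route buys a self-contained proof and an implementable selling rule even when \eqref{moment} is strict; the paper's geometric Brownian motion proposition gives a barrier rule only for optimal $F$. The paper's citation buys brevity and covers all regular diffusions at once.

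Two clean-ups are worth making. First, delete the detour via $G+(1-m)\delta_0$. It is not a probability measure, and no mixture of $G$ with mass at $0$ can have mean $1$ when $G$ has mean $m<1$. Drop the vague choice of $\rho$ as well; the hitting time of level $m$ is all you need. Second, for $A=0$ the classical Az\'ema--Yor embedding does not apply when $\log z$ is not $F$-integrable. There, cite case i) of Theorem~\ref{thm:sko-gen}, or a recurrence argument that generates the needed randomization from the path.
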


We say a distribution that satisfies \eqref{moment} with equality is \textit{optimal}.

\begin{proof}
	While this can be directly proved using the results of \cite[Corollaries 2.1 and 2.2]{GF00} about a Brownian motion with drift via the transformation
	\begin{equation}\label{expo}
	Z_t := \frac{\log{(X_t)} - \log{(x)}}{\sigma} = \biggl(\frac{\mu-r}{\sigma} - \frac{\sigma}{2}\biggr) t + W_t,
	\end{equation}
we just treat this as a special case of the general result given in Theorem \ref{thm:sko-gen}.
\end{proof}

With the help of this theorem, we can characterize all super-attainable distributions thereby paving the way for the economic key message.

\begin{theorem}\label{super-gbm}
	Depending on the relation of excess drift and volatility, we get the following results:
	\begin{itemize}
	\item[i)] If $\mu-r \geq \sigma^2/2$ (equivalently $A \leq 0$) then any distribution $F$ is super-attainable.
	\item[ii)] If $\mu-r < \sigma^2/2$ (equivalently $A > 0$), a distribution $F$ is super-attainable if and only if \eqref{moment} holds. In this case, moreover,
	\begin{itemize}
		\item[a)] If $ 0 < \mu -r < \sigma^2/2$ (equivalently $ 0< A < 1$ ) and $F$ is optimal, non-constant and with finite first moment $m = \int z \, F(\ud z)$, then $m > x$.
		\item[b)] If $\mu \leq r$ (equivalently $A \geq 1$), then $F$ has a finite first moment $m = \int z \, F(\ud z)$ satisfying $m \leq x$.
	\end{itemize}
	\end{itemize}
\end{theorem}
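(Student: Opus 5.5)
The plan is to reduce super-attainability to attainability via the previous theorem. A distribution $G$ dominates $F$ (i.e. $G \succeq F$) if and only if it can be obtained by shifting mass of $F$ upward. Hence $F$ is super-attainable if and only if some attainable $G$ with $G\succeq F$ exists, and by the previous theorem this means some $G \succeq F$ on $\mathbb{R}_{>0}$ satisfying \eqref{moment}.

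\emph{Case i).} Here $A\le 0$, and the plan is to shift mass upward so that the power moment becomes small.
\begin{itemize}
\item If $A<0$, the map $z\mapsto (z/x)^A$ is decreasing. Given $F$, take $G$ to be the law of $\max(Y,c)$ with $Y\sim F$. Since $\max(Y,c)\ge Y$ pointwise, $G\succeq F$. Moreover $\int (z/x)^A\,G(\ud z)\le (c/x)^A$, which is at most $1$ as soon as $c\ge x$. So $c=x$ already works.
\item If $A=0$, condition \eqref{moment} reads $\int F(\ud z)\le 1$, which holds trivially. Then $F$ itself is attainable, after noting that the support must be in $\mathbb{R}_{>0}$. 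If $F$ puts mass on $(-\infty,0]$, shift that mass to $x$, which gives a dominating distribution.
\end{itemize}
In both subcases every $F$ is super-attainable.

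\emph{Case ii).} Here $A>0$, so $z\mapsto (z/x)^A$ is increasing. If $G\succeq F$, then $\int \varphi\,\ud G\ge\int\varphi\,\ud F$ for every increasing $\varphi$. Hence if an attainable $G\succeq F$ exists, $F$ satisfies \eqref{moment}. Conversely, if $F$ satisfies \eqref{moment}, $F$ is attainable and therefore super-attainable. For distributions with mass at nonpositive values, I will interpret $(z/x)^A$ through $z\vee 0$, or restrict to positive support as in the paper; this is a minor convention to settle.

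\emph{Part a).} Here $0<A<1$, so $z\mapsto (z/x)^A$ is strictly concave. For non-constant $F$, Jensen's inequality gives strictly
\[
1=\int (z/x)^A\,F(\ud z) < (m/x)^A,
\]
and hence $m>x$.

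\emph{Part b).} Here $A\ge1$ and $z\mapsto (z/x)^A$ is convex. Jensen's inequality gives $(m/x)^A\le \int (z/x)^A\,\ud F\le 1$, so $m\le x$. Finiteness of $m$ comes first: since $z/x\le 1+(z/x)^A$, the first moment is bounded by $x(1+1)<\infty$.

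The main subtlety is the equivalence between $G\succeq F$ and monotone coupling. I will use the quantile coupling $G^{-1}(U)\ge F^{-1}(U)$, which gives this directly. The rest consists of elementary Jensen and monotonicity arguments.
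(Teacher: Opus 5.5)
Your proof is correct and follows the paper's route. The law of $\max(Y,c)$ is exactly the paper's truncation $F^c$, and parts ii), a), b) use the same monotonicity and Jensen arguments. The differences are minor: you fix $c=x$, which already gives \eqref{moment} as an inequality (all that super-attainability needs), whereas the paper tunes $c$ and mixes $F^{\tilde c/2}$ with $F^{\tilde c}$ to force equality; and your bound $z/x\le 1+(z/x)^A$ makes explicit the finiteness of $m$ in b), which the paper merely asserts.

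Do settle the support convention by restricting to $\mathbb{R}_{>0}$ rather than reading $(z/x)^A$ through $z\vee 0$. Under the latter reading, an $F$ that charges $(-\infty,0]$ and satisfies \eqref{moment} with equality is not super-attainable, since any law on $\mathbb{R}_{>0}$ dominating it has power moment strictly above $1$, so the converse in ii) fails.
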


\begin{proof}
That all attainable distributions (i.e., satisfying \eqref{moment}) are super-attainable is clear.

In case i), it remains to show that also all other distributions are super-attainable. We aim for finding a distribution dominating $F$ in first-order such that the moment inequality \eqref{moment} holds with equality. Define a family of distributions $F^c$ by defining
\[
F^c(z) := \left\{ \begin{array}{ll} F(z) & \text{ if } z \geq c,\\ 0 & \text{ else. }\end{array}\right.
\]
Note that $F^c \succeq F$ by construction. Now set $\tilde{c} := \inf\{ c> 0 : \int (z/x)^A \, F^c(\ud z) \leq 1 \}$. In the case that $F$ is continuous, $F^{\tilde{c}}$ gives us already the solution. In the general case set 
\[
\tilde{x} = \int (z/x)^A \, F^{\frac{\tilde{c}}{2}}(\ud z), \qquad \tilde{y} = \int (z/x)^A \, F^{\tilde{c}}(\ud z)
\]
(where $\tilde{y} \leq 1 \leq \tilde{x}$ thanks to the right-continuity of the distribution function), and define the distribution $\tilde{F}$ by
\[
\tilde{F}(x) := \frac{1-\tilde{y}}{\tilde{x} - \tilde{y}} F^{\frac{\tilde{c}}{2}}(x) + \frac{\tilde{x}-1}{\tilde{x} - \tilde{y}} F^{\tilde{c}}(x)
\]
Now $\tilde{F} \succeq F$ (as $\tilde{F}(x) \leq F(x)$ for $x < \tilde{c}$ and $\tilde{F}(x) = F(x)$ for $x \geq \tilde{c}$), while it satisfies the moment condition \eqref{moment}.

To show ii), assume that $F$ is super-attainable, i.e., there exists some $F^*$ satisfying \eqref{moment} such that $F^* \succeq F$. However, as $\mu-r < \sigma^2/2$ implies $A>0$, the mapping $z \mapsto z^A$ is strictly increasing and therefore
\[
\int \Bigl(\frac{z}{x}\Bigr)^A \, F(\ud z) \leq \int \Bigl(\frac{z}{x}\Bigr)^A \, F^*(\ud z) \leq 1.
\]

To show the fine points in this case, we note that in ii)\ a) we have $0 < \mu -r < \sigma^2/2$ whence $A \in (0,1)$ and the power function $z \mapsto z^A$ is strictly concave on the nonnegative reals. Thus, assuming that an optimal distribution $F$ has a first moment, it follows from Jensen's inequality
\[
1 = \int \Bigl(\frac{z}{x}\Bigr)^A \, F(\ud z) < \biggl(\frac{1}{x}\int z \, F(\ud z)\biggr)^A = \Bigl(\frac{m}{x}\Bigr)^A
\]
and thus $m > x$.

Finally, when $r \geq \mu$ we have $A \geq 1$ and thus $F$ has to have a finite first moment. Again, using Jensen's inequality we note that
\[
1 \geq \int \Bigl(\frac{z}{x}\Bigr)^A \, F(\ud z) \geq \biggl(\frac{1}{x}\int z \, F(\ud z)\biggr)^A = \Bigl(\frac{m}{x}\Bigr)^A
\]
whence $m \leq x$ as claimed.
\end{proof}

From an economic perspective, this is best analyzed from the point of view of the discount rate. In the case that $\mu -r \geq \sigma^2 / 2$ the discount rate is so low that indeed all distributions are super-attainable; there is no restriction for the holder to get what she wishes for, or conversely, there seems to be no good reason to sell the asset with any specific target distribution in mind, as a strictly better outcome can also be achieved. The asset holder might, absent forcing external criteria, not be interested in selling at all. Conversely, if $r \geq \mu$ the discount rate is so high that only distributions with a mean below the current asset price are super-attainable. From basic principles of risk aversion and the preference for payoffs with a higher mean over those with a lower, it follows that the asset should be sold immediately.

The most interesting case occurs when $0 < \mu -r < \sigma^2/2$. Here optimal stopping can help to end up in a distribution with mean actually higher than the current asset price, though at the price of taking on uncertainty. This can be seen as similar to the risk-return trade-off in the efficient frontier of Markowitz's portfolio optimization. We will revisit this in the specific examples at the end of this section. For the moment we want to concentrate on the question of how optimal stopping can actually be achieved in this case.

The construction of the stopping time is based on the Az\'{e}ma--Yor \cite{AY79a, AY79b} solution to the classical Skorokhod embedding problem.   

\begin{proposition}
	In the case that $\mu - r < \frac{\sigma^2}{2}$, an optimal distribution $F$ (i.e., \eqref{moment} is satisfied with equality) can be attained by setting $\tau$ to be the first hitting time
			\begin{align*}
			\tau & =\inf\bigl\{ t > 0 \, : \, M_t \geq \Psi_F^{\mu, \sigma}(X_t)\bigr\}\\
			& =\inf\bigl\{ t > 0 \, : \, (M_t, X_t) \in \mathcal{D}_F \bigr\}, \quad \mathcal{D}_F := \bigl\{(m,x) \in \mathbb{R}^2 \, : \, m \geq \Psi_F^{\mu, \sigma}(x)\bigr\}
			\end{align*}
			where $M$ is the running maximum of $X$, $M_t = \sup_{s \leq t} X_s$, and 
			\[
			\Psi_F^{\mu, \sigma}(z) = \left\{ \begin{array}{ll} \biggl( \frac{1}{1-F(z-)} \int_{[z,+\infty)} u^A \, F(\ud u) \biggr)^\frac{1}{A}& \text{ if } F(z-) <1, \\ z & \text{ else}. \end{array}\right.
			\]
\end{proposition}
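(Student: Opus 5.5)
The plan is to reduce to the classical Azéma--Yor embedding for a continuous local martingale via the scale function. When $\mu-r<\sigma^2/2$ we have $A>0$, and by Itô's formula $Y_t := (X_t/x)^A$ satisfies
\[
\ud Y_t = A Y_t\Bigl((\mu-r) + \tfrac{A-1}{2}\sigma^2\Bigr)\ud t + A\sigma Y_t\,\ud W_t = A\sigma Y_t\,\ud W_t .
\]
The drift vanishes by the choice of $A$, so $Y$ is a positive exponential martingale (a driftless geometric Brownian motion) started at $Y_0=1$. Since $z\mapsto (z/x)^A$ is strictly increasing, the running maximum of $Y$ is $(M_t/x)^A$. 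Embedding $F$ in $X$ is then the same as embedding the image law $G := F\circ s^{-1}$ in $Y$, where $s(z)=(z/x)^A$. The optimality condition $\int (z/x)^A F(\ud z)=1$ says exactly that $G$ has mean $1=Y_0$, and $G$ is supported on $\R_{>0}$.

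Next I would apply the Azéma--Yor theorem to $Y$. Because $Y$ is a time-changed Brownian motion started at $1$ with $\langle Y\rangle_\infty=\infty$ a.s., the time change is finite. So for a centred law $G$ with $\int y\,G(\ud y)=1$, the stopping time $\tau=\inf\{t: \max_{s\le t}Y_s \ge b_G(Y_t)\}$ embeds $G$, with barycentre function $b_G(y)=\frac{1}{1-G(y-)}\int_{[y,\infty)} v\,G(\ud v)$. Translating back gives
\[
b_G(s(z)) = \frac{1}{1-F(z-)}\int_{[z,\infty)} (u/x)^A F(\ud u) = \bigl(\Psi_F^{\mu,\sigma}(z)/x\bigr)^A .
\]
The inequality $\max Y \ge b_G(Y)$ is therefore equivalent to $M_t\ge \Psi_F^{\mu,\sigma}(X_t)$, which yields both displayed forms of $\tau$ (the region $\mathcal D_F$ is just a rewriting). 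On $\{F(z-)=1\}$ the convention $\Psi=z$ is consistent with the convention $b_G(y)=y$ there, and it forces stopping once $X$ exceeds the support.

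The main obstacle is making sure the Azéma--Yor result applies to $Y$ in its exact form. Two points need care. First, the embedded law must be exactly $G$ rather than a sub-law, which uses $\int y\,G(\ud y)=1$ and not merely $\le 1$. Second, $\tau<\infty$ a.s. must hold. I would use the Dambis--Dubins--Schwarz representation $Y_t=B_{\langle Y\rangle_t}$, under which the Azéma--Yor time for $Y$ corresponds to the Azéma--Yor time for $B$; the latter is a.s. finite and embeds $G$ by the classical result \cite{AY79a}. Since $\langle Y\rangle_t\to\infty$ continuously, the corresponding time for $Y$ is finite. Atoms of $F$ (the left limits $F(z-)$) need only the standard form of $b_G$ with $G(y-)$, so they require no separate treatment. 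Alternatively, the whole statement is the specialization of the general Theorem \ref{thm:sko-gen} with scale function $s$.
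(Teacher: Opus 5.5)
Your reduction to natural scale is the right idea, and your barrier translation $b_G(s(z))=(\Psi_F^{\mu,\sigma}(z)/x)^A$ is correct, including the conventions at atoms and beyond the support. This route is more direct than the paper's, which passes to $\log X$ and invokes the Grandits--Falkner embedding for drifted Brownian motion, so boundary issues there are covered by the cited result. There is, however, a genuine error in the step you yourself flagged, the finiteness of $\tau$: the claim $\langle Y\rangle_\infty=\infty$ a.s.\ is false. Since $A>0$, we have $Y_t=\exp(A\sigma W_t-\tfrac{1}{2}A^2\sigma^2 t)\to 0$ a.s.\ and $\langle Y\rangle_\infty=A^2\sigma^2\int_0^\infty Y_s^2\,\ud s<\infty$ a.s. It could not be otherwise: a Brownian motion from $1$ run on an unbounded clock hits $0$, whereas $Y>0$. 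In the Dambis--Dubins--Schwarz picture (which now needs an enlarged space), $Y$ is a Brownian motion $B$ from $1$ observed only up to its first hitting time $H_0$ of $0$, and $H_0=\langle Y\rangle_\infty$. This finite-scale boundary at $0$ is exactly why attainability is an inequality condition. As written, your argument never uses that $G$ lives on $(0,\infty)$, so it would equally ``embed'' a mean-one law charging negative values into the positive process $Y$. That shows an ingredient is missing.

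What you need instead is $\tau_B<H_0$ a.s. The reason is that the Az\'{e}ma--Yor time of $Y$ equals $\langle Y\rangle^{-1}(\tau_B)$ on $\{\tau_B<H_0\}$ and is $+\infty$ otherwise. This does hold. The stopped process $B^{\tau_B}$ of the Az\'{e}ma--Yor embedding is a uniformly integrable martingale, so optional sampling gives $B_{\tau_B\wedge H_0}=\E[B_{\tau_B}\mid\mathcal{F}_{H_0}]>0$ a.s., because $B_{\tau_B}\sim G$ is a.s.\ strictly positive. Since $B_{\tau_B\wedge H_0}=0$ on $\{H_0\le\tau_B\}$, that event is null. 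Alternatively, there is a pathwise argument. The running maximum of $B$ exceeds $1$ immediately. For each $m>1$ there is $y_m>0$ with $b_G(y_m)<m$, because $b_G(y)\to\int v\,G(\ud v)=1$ as $y\downarrow 0$. Hence $B$ is stopped no later than when it falls to a strictly positive level. Consequently $\tau_B$ involves only the part of $B$ determined by $Y$, and $\tau=\langle Y\rangle^{-1}(\tau_B)$ is an a.s.\ finite stopping time of $X$ with $X_\tau\sim F$. With this replacement your proof goes through.
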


\begin{proof}
	This follows directly from the results of Grandits and Falkner \cite[Corollaries 2.1 and 2.2]{GF00} for the Skorokhod problem for drifted Brownian motion by means of an exponential transformation. Specifically, the authors show that if $\int e^{-2\kappa x} \nu(dx) = 1$ for a measure $\nu$ and $\kappa >0$, then a stopping time $\tau$  for which the drifted Brownian motion $X^\kappa_t = \kappa t + B_t$ stopped at $\tau$ has distribution $\nu$ can be represented as 
	\[
	\tau = \inf\bigl\{ t > 0 \, : \, M_t^\kappa \geq \Psi_\nu^\kappa(X_t^\kappa)\bigr\}, \qquad M_t^\kappa = \sup_{s \leq t} X_s^\kappa  
	\]
	for $\Psi_\nu^\kappa = u^{-1}\circ \psi_\nu \circ u$ with scale function $u(z) = 1- e^{2\kappa z} = 1-e^{\sigma A z}$ and barycenter function of the right tail
	\[
	\psi_\nu (y) = \left\{ \begin{array}{ll} \frac{1}{\nu([y,\infty))} \int_{[y,\infty)} z \, \nu(\ud z) & \text{ if } \nu\bigl([y,\infty)\bigr)>0 , \\ y & \text{ else}. \end{array}\right.
	\]
	Switching the sign of the Brownian motion to obtain results for negative $\kappa$ and applying again the substitution \eqref{expo} yields the result.
\end{proof}

In this case we can also give an explicit description of the expected selling time.

\begin{proposition}
	The expected time to attain an optimal distribution in the case $0<\mu - r < \frac{\sigma^2}{2}$ is finite if and only if $\int \vert \log(z) \vert \, F(\ud z) < \infty$. In that case it is given by
	\[
	\E[\tau] = \frac{2}{\sigma^2 -2(\mu-r) } \int \log\Bigl(\frac{x}{z}\Bigr) \, F(\ud z).
	\]
\end{proposition}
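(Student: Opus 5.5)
The plan is to work with the drifted Brownian motion $Z$ from \eqref{expo}, $Z_t = \kappa t + W_t$ with $\kappa = \frac{\mu-r}{\sigma}-\frac{\sigma}{2} = -\frac{\sigma A}{2}<0$. Since $0<\mu-r<\sigma^2/2$, this drift is strictly negative. Write $\tau$ for the Az\'ema--Yor stopping time of the previous proposition. Then $X_\tau = x e^{\sigma Z_\tau}$ with $X_\tau\sim F$, so
\[
Z_\tau = \frac{1}{\sigma}\log\Bigl(\frac{X_\tau}{x}\Bigr),
\]
and $\int|\log z|\,F(\ud z)<\infty$ is equivalent to $\E|Z_\tau|<\infty$. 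The identity should come from Wald's identity for Brownian motion with drift, $\E[Z_{\tau}] = \kappa\,\E[\tau]$, which gives
\[
\E[\tau] = \frac{1}{\kappa}\,\E[Z_\tau] = -\frac{2}{\sigma A}\cdot\frac{1}{\sigma}\int\log\Bigl(\frac{z}{x}\Bigr)F(\ud z) = \frac{2}{\sigma^2-2(\mu-r)}\int \log\Bigl(\frac{x}{z}\Bigr)F(\ud z).
\]
This is the claimed formula, since $\sigma^2 A = \sigma^2-2(\mu-r)$.

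To justify this, first apply optional stopping to the martingale $W_t = Z_t-\kappa t$ at the bounded times $\tau\wedge n$. This gives $\E[Z_{\tau\wedge n}] = \kappa\,\E[\tau\wedge n]$, that is,
\[
|\kappa|\,\E[\tau\wedge n] = -\E[Z_{\tau\wedge n}].
\]

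For the direction ``$\E[\tau]<\infty\Rightarrow$ integrability'', Wald's identity for $W$ gives $\E[W_\tau]=0$ and $\E|W_\tau|\le \E[W_\tau^2]^{1/2}=\E[\tau]^{1/2}<\infty$. Hence $Z_\tau = W_\tau+\kappa\tau$ is integrable, which is the log-moment condition, and the formula follows directly.

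For the converse, assume $\E|Z_\tau|<\infty$. By monotone convergence, $\E[\tau] = \lim_n \frac{1}{|\kappa|}\E[-Z_{\tau\wedge n}]$, so it suffices to bound $\E[-Z_{\tau\wedge n}]$ uniformly in $n$, and in fact to show $\E[Z_{\tau\wedge n}]\to\E[Z_\tau]$. Since $\E[-Z_{\tau\wedge n}] = \E[(Z_{\tau\wedge n})^-]-\E[(Z_{\tau\wedge n})^+]$, I would control the positive and negative parts separately. For the positive part, the positive excursions are dominated by the running maximum $\sup_{t\le\tau}Z_t = \frac{1}{\sigma}\log(M_\tau/x)$. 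For the negative part, I would use the structure of the Az\'ema--Yor time: before $\tau$ we have $M_t<\Psi(X_t)$, so $X_t$ stays above $\Psi^{-1}(M_t)\ge \Psi^{-1}(x)$ in an appropriate sense. As a result $Z_{\tau\wedge n}\ge \frac{1}{\sigma}\log\bigl(\Psi^{-1}(M_\tau)/x\bigr)$, which is essentially the level at which the process is stopped once the maximum reaches $M_\tau$. Alternatively, one can use the martingale $e^{\sigma A Z_t}=(X_t/x)^A$, which is uniformly integrable up to $\tau$ by the equality in \eqref{moment}, to handle the lower tail. Once both parts are dominated by integrable random variables, dominated convergence yields $\E[Z_{\tau\wedge n}]\to \E[Z_\tau]$, hence $\E[\tau]<\infty$ together with the formula.

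I expect the main obstacle to be showing that the running maximum $\log M_\tau$ is integrable when $F$ has a finite log-moment. For this I would use the Az\'ema--Yor law of the maximum, $\P[M_\tau\ge m] = \P[X_\tau\ge \Psi^{-1}(m)]$, which follows from the hitting-time description, together with the explicit form of $\Psi$. Since $\Psi(z)\ge z$ and $\Psi$ is a power mean of the right tail of $F$, the inequality $\log\Psi(z)\le$ (conditional mean of $\log u$ over $[z,\infty)$) should fail in general, because the power mean dominates the geometric mean. I would therefore instead bound $\E[\log^+ M_\tau]$ through a Hardy-type inequality, integrating $\P[\log M_\tau > s]$ in $s$ and comparing with $\int\log^+ z\,F(\ud z)$. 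The concavity coming from $A\in(0,1)$ should supply the needed control.
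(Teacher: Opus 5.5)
Your skeleton (pass to $Z_t=\kappa t+W_t$ with $\kappa=-\sigma A/2<0$ and use Wald's identity $\E[Z_\tau]=\kappa\,\E[\tau]$) is what the paper does: it simply quotes Grandits--Falkner's Proposition 2.2 and transforms back. Your direction ``$\E[\tau]<\infty\Rightarrow$ log-moment and formula'' is correct.

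\textbf{The converse is not established as written.} Your boundary bound $Z_{\tau\wedge n}\ge\frac1\sigma\log\bigl(\Psi^{-1}(M_\tau)/x\bigr)$ is false. Before $\tau$ you only know $X_t>\Psi^{-1}(M_t)$. Since $M_t\le M_\tau$ and $\Psi^{-1}$ is nondecreasing, this gives nothing at the level $\Psi^{-1}(M_\tau)$; on $\{\tau>n\}$ the maximum can still rise after time $n$. The weaker bound $\Psi^{-1}(M_t)\ge\Psi^{-1}(x)$ is useless too. For the optimal log-normal target, $\Psi(z)\downarrow x$ as $z\downarrow 0$, so $\Psi^{-1}(x)=0$. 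Finally, the step you call the main obstacle, integrability of $\log M_\tau$ via a ``Hardy-type inequality'', is only a hope, not an argument.

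\textbf{Both difficulties can be avoided.} Since $|\kappa|\,\E[\tau\wedge n]=\E[Z^-_{\tau\wedge n}]-\E[Z^+_{\tau\wedge n}]\le\E[Z^-_{\tau\wedge n}]$, you can drop the positive part entirely. No dominated convergence and no running maximum are needed; only the negative part must be bounded uniformly in $n$. Your alternative route does this once made precise.
\begin{itemize}
\item[(1)] Set $N_t:=(X_{t\wedge\tau}/x)^A=e^{\sigma A Z_{t\wedge\tau}}$ and $N_\infty:=(X_\tau/x)^A$. Then $N$ is a nonnegative martingale with $N_t\to N_\infty$ a.s.
\item[(2)] Optimality gives $\E[N_\infty]=1=N_0$. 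By Scheff\'e's lemma the convergence holds in $L^1$, hence $N_t=\E[N_\infty\mid\mathcal{F}_t]$.
\item[(3)] Apply conditional Jensen with the convex function $y\mapsto(\log y)^-$, noting $(\log N_n)^-=\sigma A\,Z^-_{\tau\wedge n}$. This gives $\E[Z^-_{\tau\wedge n}]\le\E[Z^-_\tau]=\frac1\sigma\int\log^-(z/x)\,F(\ud z)<\infty$.
\item[(4)] By monotone convergence, $\E[\tau]\le\E[Z^-_\tau]/|\kappa|<\infty$. Your first direction then delivers the formula.
\end{itemize}
Incidentally, $\int\log^+(z/x)\,F(\ud z)\le\frac1A\int(z/x)^A\,F(\ud z)=\frac1A$. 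So for $A\in(0,1)$ only the lower half of the log-moment condition carries any content.
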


\begin{proof}
	\cite[Proposition 2.2]{GF00} shows that $\E[\tau] = -\frac{1}{\kappa} \E[Y_\tau]$ whenever $\int \vert z \vert  \, \nu(\ud z) < \infty$. Using again the exponential transformation \eqref{expo} gives the result.
\end{proof}

If we add the additional condition that the sale has to be done before a specified time $T$, similar results can be established using the results of Ankirchner, Hobson and Strack \cite{AHS15} to provide a sufficient condition -- at the price of losing the nice intuitive construction of the Az\'{e}ma--Yor embedding:
	
\begin{proposition}
	Denote by $\Phi$ and $\varphi$ the cumulative distribution function (cdf) and density of a standard normal distribution. Let $F$ be an optimal target distribution with density $f$ and assume that $g:= F^{-1} \circ \Phi$ is absolutely continuous. $F$ is attainable before time $T$ if
	\[
	\varphi(z) \leq \sigma \bigl(f\circ g\bigr)(z) g(z) \sqrt{T}
	\]
	on $\{f\bigl(g(z)\bigr)>0\}$.
\end{proposition}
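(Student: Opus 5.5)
The plan is to reduce the problem to a driftless diffusion in natural scale and then invoke the bounded-time embedding criterion of Ankirchner, Hobson and Strack \cite{AHS15}. We work in the case $\mu-r<\sigma^2/2$, i.e.\ $A>0$. By It\^o's formula, the scale-transformed process
\[
Y_t := \Bigl(\frac{X_t}{x}\Bigr)^A
\]
satisfies $\ud Y_t = \sigma A\, Y_t\,\ud W_t$ with $Y_0=1$. The drift cancels exactly because $A = 1-2(\mu-r)/\sigma^2$. So $Y$ is a martingale diffusion with diffusion coefficient $\eta(y)=\sigma A y$. Since $z\mapsto (z/x)^A$ is strictly increasing, $X_\tau\sim F$ holds if and only if $Y_\tau\sim F_Y$, where $F_Y$ is the image of $F$ under this map. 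Optimality of $F$, i.e.\ equality in \eqref{moment}, says precisely that $F_Y$ has mean $1=Y_0$. This is the necessary centering condition for embedding a law into a martingale by a (bounded, hence uniformly integrating) stopping time.

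Next I would recall the result of \cite{AHS15} in the form I need. For a martingale diffusion $\ud Y=\eta(Y)\,\ud W$ and a centered target law with quantile-type map $h := F_Y^{-1}\circ\Phi$, absolutely continuous, they construct (a Bass-type embedding driven by a Brownian motion on $[0,T]$) a stopping time $\tau\le T$ with $Y_\tau\sim F_Y$. Their sufficient condition is $h'(z)\le \sqrt{T}\,\eta\bigl(h(z)\bigr)$ wherever the density is positive. Monotonicity of the transformation gives $h = (g/x)^A$ with $g=F^{-1}\circ\Phi$. Hence $h$ is absolutely continuous whenever $g$ is, and
\[
h'(z) = A\,\frac{h(z)}{g(z)}\,g'(z), \qquad g'(z) = \frac{\varphi(z)}{(f\circ g)(z)} \text{ on } \{f(g(z))>0\}.
\]

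Plugging into the criterion, the factor $A h(z)$ cancels on both sides. Indeed,
\[
h'(z)\le \sqrt T\,\sigma A\,h(z) \quad\Longleftrightarrow\quad \frac{\varphi(z)}{(f\circ g)(z)\,g(z)}\le \sigma\sqrt T,
\]
which is exactly the stated condition. The resulting $\tau\le T$ then gives $X_\tau = x\,Y_\tau^{1/A}\sim F$.

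I expect the main obstacle to be the careful matching of hypotheses with \cite{AHS15}: the exact form of their condition (Lipschitz/speed bound versus a pointwise derivative bound), whether they require the target to lie in the interior of the state space $(0,\infty)$, and the treatment of the set where $f(g(z))=0$. I would first check that their theorem is stated for general martingale diffusions with $\eta(y)=\sigma A y$ allowed (geometric, degenerate at $0$). If it is only stated for Brownian motion, I would instead apply it after the time change $\ud\langle Y\rangle_t=\sigma^2A^2Y_t^2\,\ud t$ and redo the comparison of clocks, which is where the factor $\sigma$ in the bound originates.
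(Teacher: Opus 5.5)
Your proposal is correct and follows essentially the paper's route. The statement is the geometric Brownian motion case ($\sigma(z)=\sigma z$, $\mu(z)=(\mu-r)z$) of the bounded-time criterion $g'(z)/\sigma(g(z))\leq\sqrt{T}$ from \cite[Theorem 6]{AHS15}, and your passage to natural scale $Y=(X/x)^A$ with the cancellation $h'/\eta(h)=g'/(\sigma g)$ just makes that reduction explicit.

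The hypothesis you were unsure how to match is concavity of the natural-scale coefficient $\eta$. In the paper's formulation this is $z\mapsto\sigma'(z)-2\mu(z)/\sigma(z)$ non-increasing, which here is the constant $\sigma-2(\mu-r)/\sigma$, so it holds trivially because $\eta(y)=\sigma A y$ is linear. The pointwise bound $h'\leq\sqrt{T}\,\sigma A\,h$ therefore passes directly to the Bass martingale $b(t,\cdot)=\E[h(\cdot+B_{1-t})]$ and gives $\tau\leq T$.
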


\subsection{Specific Families of Target Distributions}

In the following we consider the case where the prospective asset seller already has a family of target distributions in mind from which she wants to choose. It turns out that in the case we are most interested in, $0< \mu - r < \sigma^2/2$,  the problem can often be cast as a mean-variance trade-off, similar to Markowitz's portfolio optimization and the efficient frontier. We are considering the following cases:

\textbf{Log-normal}: Consider the case where the investor is interested in liquidating with a log-normal target distribution, $F_{a,b} \sim \mathcal{LN}(b,a^2)$, $F_{a,b}(z) = \int_0^z \frac{1}{\sqrt{2\pi}ay} e^{-\frac{(\log(y)-b)^2}{2a^2}} \, \ud y$. Then we have
	\[
\int_0^\infty z^A \, dF_{a,b}(z) = \frac{1}{\sqrt{2\pi} a}\int_{-\infty}^\infty e^{Ay} e^{-\frac{(y-b)^2}{2a^2}} \, \ud y = e^{A\bigl(b + A\frac{a^2}{2}\bigr)}
\]
and given the initial asset value $x$, the attainability condition \eqref{moment} reduces to
\[
b \leq \log(x) - A \frac{a^2}{2}.
\]
Alternatively, in terms of mean $m = e^{b + \frac{a^2}{2}}$ and variance $s^2 = (e^{a^2}-1)m^2$ of the log-normal distribution, this gives a mean-variance trade-off
\[
m^{2(2-A)} \leq x^2 \bigl(m^2+s^2\bigr)^{1-A}.
\]
We see that if $s \to \infty$, the right-hand side diverges to infinity, so distributions of arbitrary mean are attainable (at the price of increasing variance).  For an optimal distribution, the barrier function $\Psi_{F_{a,b}}^{\mu, \sigma}$ can be calculated explicitly,
\[
\Psi_{F_{a,b}}^{\mu, \sigma}(z) =\Biggl(\frac{ \Phi \left(\frac{b-\log (z)}{a}+ a A\right)}{\Phi\Bigl(\frac{b-\log(z)}{a}\Bigr)}\Biggr)^{\frac{1}{A}}e^{b+\frac{a^2}{2} A}
\]
where $\Phi$ is the cumulative distribution function of the standard normal distribution. The expected selling time is 
\[
\E[\tau] = \frac{2\bigl(\log(x)-b\bigr)}{\sigma^2 - 2(\mu-r)}.
\]

If we make the additional restriction that we consider only log-normal distributions that can be achieved before time $T$, we can do so for optimal $F$ as long as $a^2 \leq \sigma^2 T$.

\begin{center}
	\begin{figure}[htb]
		\includegraphics[width=0.45\linewidth]{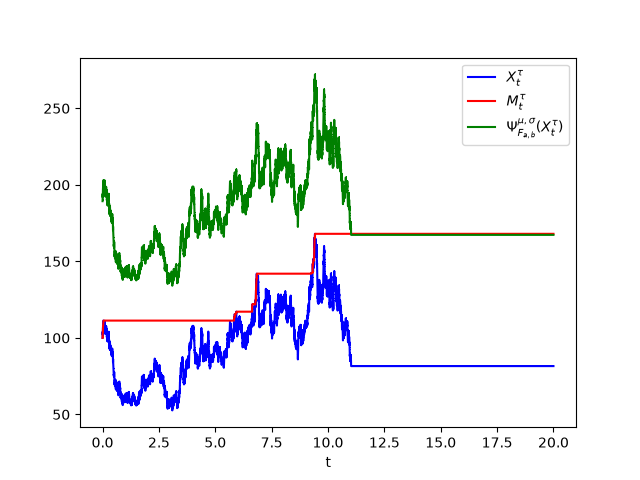} \qquad
		\includegraphics[width=0.45\linewidth]{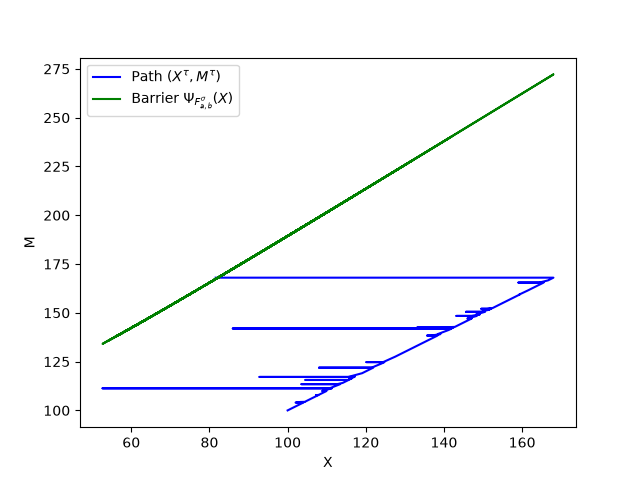}
		\caption{Az\'{e}ma--Yor barrier function $\Psi^{\mu,\sigma}_{F_{a,b}}(X_t)$ for a log-normal target distribution (parameters: $X_0 =100$, $\mu = 0.05$, $r=0.01$, $\sigma = 0.3$, $a = 0.8$). Left: Geometric Brownian motion (blue), running maximum (red) and running barrier (green) stopped upon hitting the barrier; Right: Spatial $(X_t, M_t)$ diagram with barrier function. Calculated parameters: $A \approx 0.11$, $b=4.57$, $m \approx 132.90$, $s \approx 125.84$, $\E[\tau]\approx 7.11$.}
	\end{figure}
\end{center}

\textbf{Pareto}: In the case of a Pareto distribution, $F_{x_0,p}(z) = \bigl(1-\bigl(\frac{x_0}{z}\bigr)^p\bigr)\ind_{\{z \geq x_0\}}$, we have
	\[
	\int_0^\infty z^A \, dF_{x_0,p}(z) = px_0^p \int_{x_0}^\infty z^A \frac{1}{z^{p+1}} \, \ud z = \left\{ \begin{array}{ll}\frac{px_0^A}{p-A} & \text{ if } p > A, \\ \infty & \text{ else. } \end{array}\right.
	\]
	Thus the distribution is only attainable if $p > A$. Moreover, in this case the condition \eqref{moment} on the initial asset value $x$ reduces to
	\[
	\frac{p}{p-A} \leq \Bigl(\frac{x}{x_0}\Bigr)^A.
	\]
	If, moreover, we have $p>2$, we can rewrite the condition in terms of mean $m = \frac{px_0}{p-1}$ and variance $s^2 = \frac{px_0^2}{(p-1)^2(p-2)}$ of the Pareto distribution targeted. We get
	\[
	\frac{1+\sqrt{1+m^2/s^2}}{1-A+\sqrt{1+m^2/s^2}} \leq \biggl(\frac{x(1+\sqrt{1+m^2/s^2})}{m\sqrt{1+m^2/s^2}}\biggr)^A.
	\]
	From the asymptotics $s \to \infty$ it follows that not all means are attainable, but only those satisfying $m \leq 2x((2-A)/2)^\frac{1}{A}$.
The explicit form of the barrier for optimal distributions is
\[
\Psi_{F_{x_0,p}}^{\mu,\sigma}(z) =\left\{\begin{array}{ll} x & \text{ if } z \leq x_0, \\ \frac{x}{x_0} z & \text{ if } z > x_0. \end{array}\right. 
\]
For the expected selling time this gives
\[
\E[\tau] =  \frac{2\bigl(\log(\frac{x}{x_0}) - \frac{1}{p}\bigr)}{\sigma^2 - 2(\mu-r)}.
\]

The sufficient condition for a bounded optimal embedding before time $T$ translates here into
\[
\frac{\varphi(z)}{1-\Phi(z)} \leq p \sigma \sqrt{T} 
\]
and as the left side grows asymptotically as $z$, this bound is not satisfied for any choice of the parameter $p$.

	\begin{center}
	\begin{figure}
		\includegraphics[width=0.45\linewidth]{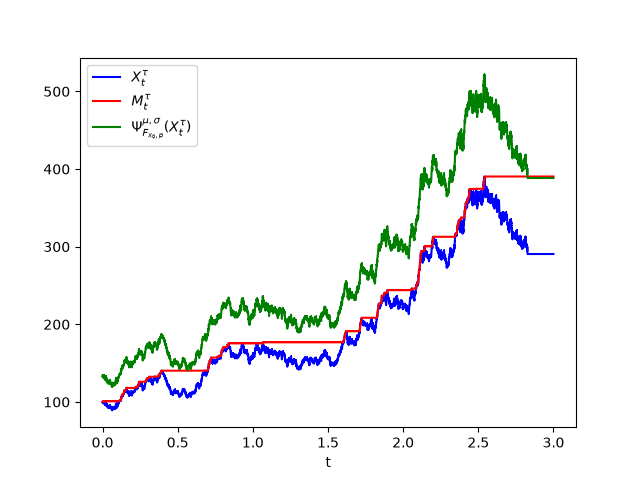} \qquad
		\includegraphics[width=0.45\linewidth]{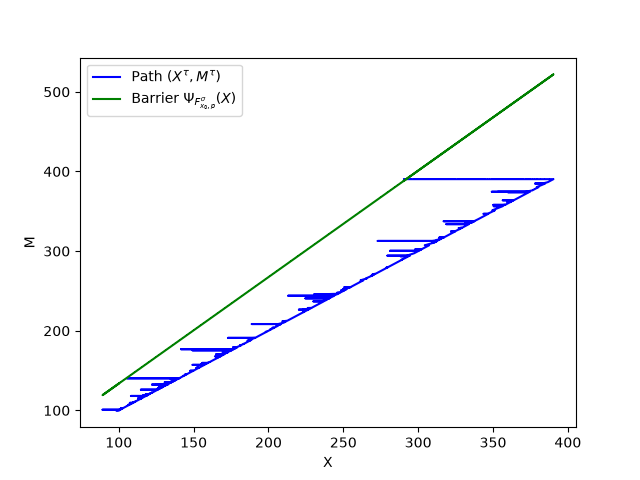}
		\caption{Az\'{e}ma--Yor barrier function $\Psi^{\mu, \sigma}_{F_{x_0,p}}(X_t)$ for a Pareto target distribution (parameters: $X_0 =100$, $\mu = 0.05$, $r=0.01$, $\sigma = 0.3$, $p=3.5$). Left: Geometric Brownian motion (blue), running maximum (red) and running barrier (green) stopped upon hitting the barrier; Right: Spatial $(X_t, M_t)$ diagram with barrier function. Calculated parameters: $A \approx 0.11$, $x_0 \approx 74.80$, $m \approx 104.72$, $s \approx 45.70$, $\E[\tau]\approx 0.93$.}
	\end{figure}
\end{center}

\textbf{Weibull}: For the Weibull distribution, $F_{k,\lambda}(z) = \bigl(1-e^{-(z/\lambda)^k}\bigr)\ind_{\{z \geq 0\}}$,
\[
\int_0^\infty z^A \, dF_{k,\lambda}(z)  = \int_{0}^\infty z^A \frac{k}{\lambda} \Bigl(\frac{z}{\lambda}\Bigr)^{k-1} e^{-(\frac{z}{\lambda})^k} \, \ud z = \lambda^{A} \int_{0}^\infty  w^\frac{A}{k} e^{-w} \, dw = \lambda^{A}\Gamma\Bigl(1+\frac{A}{k}\Bigr) 
\]
The condition on the initial asset value $x$ reads then
\[
 \Gamma \Bigl(1+\frac{A}{k}\Bigr) \leq \Bigl(\frac{x}{\lambda}\Bigr)^A.
\]
To check this condition for given mean $m = \lambda \Gamma\bigl(1+\frac{1}{k}\bigr)$ and variance $s^2 = \lambda^2 \bigl(\Gamma\bigl(1+\frac{2}{k}\bigr) - \Gamma\bigl(1+\frac{1}{k}\bigr)^2\bigr)$ of the Weibull distribution, we first have to solve numerically
\[
m^2 \Gamma\Bigl(1+\frac{2}{k}\Bigr) = (m^2 +s^2)\Gamma\Bigl(1+\frac{1}{k}\Bigr)^2
\]
for $k$ (note that there is no uniqueness issue as $(0, \infty) \ni z \mapsto \Gamma(1+2z)/\Gamma(1+z)^2$ is strictly increasing from $1$ to $\infty$), and then calculate $\lambda = m/\Gamma\bigl(1+\frac{1}{k}\bigr)$. One checks numerically that indeed all expected means are attainable and one can get an explicit expression for the expected selling time,
\[
\E[\tau] =  \frac{2\bigl(\log(\frac{x}{\lambda})+\frac{\gamma}{k}\bigr)}{\sigma^2 - 2(\mu-r)},
\]
where $\gamma$ is the Euler--Mascheroni constant.

The sufficient condition for a bounded embedding before time $T$ translates here into
	\[
	-\frac{\varphi(z)}{\bigl(1-\Phi(z)\bigr)\log\bigl(1-\Phi(z)\bigr)} \leq k \sigma \sqrt{T} 
	\]
	which is not satisfied for any finite $T$ as the left-hand expression diverges for $z \to -\infty$.

\textbf{Gamma}: In the case of a gamma distribution, $F_{\alpha,\beta} \sim \Gamma(\alpha,\beta)$, $F_{\alpha,\beta}(z) = \frac{\beta^\alpha}{\Gamma(\alpha)}\int_0^z  y^{\alpha -1}e^{-\beta y}\, \ud y$,
	\[
	\int_0^\infty z^A \, dF_{\alpha,\beta}(z) = \frac{\beta^\alpha}{\Gamma(\alpha)}\int_0^\infty z^A z^{\alpha -1} e^{-\beta z} \, \ud z = \frac{\Gamma(\alpha + A )}{\Gamma(\alpha)\beta^{A}}
	\]
	and given the initial asset value $x$, the attainability condition \eqref{moment} reduces to
	\[
	\frac{\Gamma(\alpha + A)}{\Gamma(\alpha)\beta^{A}} \leq x^A.
	\]
	In terms of mean $m = \frac{\alpha}{\beta}$ and variance $s^2 = \frac{\alpha}{\beta^2}$ this reads
	\[
	\Gamma\biggl(\frac{m^2}{s^2} + A \biggr) \leq \Bigl(\frac{x}{m}\Bigr)^A\Bigl(\frac{m^2}{s^2}\Bigr)^{A}\Gamma\biggl(\frac{m^2}{s^2}\biggr).
	\]
	Note that for $A \in(0,1)$ the function $(0, \infty) \ni z \mapsto \Gamma(A+z)/(z^A\Gamma(z))$ is strictly increasing from $0$ to $1$. Thus distributions with arbitrarily large expected mean are attainable. Also here the expected selling time is available in closed form,
\[
\E[\tau] =  \frac{2\bigl(\log(\beta x) - \psi(\alpha)\bigr)}{\sigma^2 - 2(\mu-r)},
\]
where $\psi = \frac{\Gamma'}{\Gamma}$ is the digamma function (polygamma of order zero).

	\begin{figure}
	\includegraphics[width=0.31\linewidth]{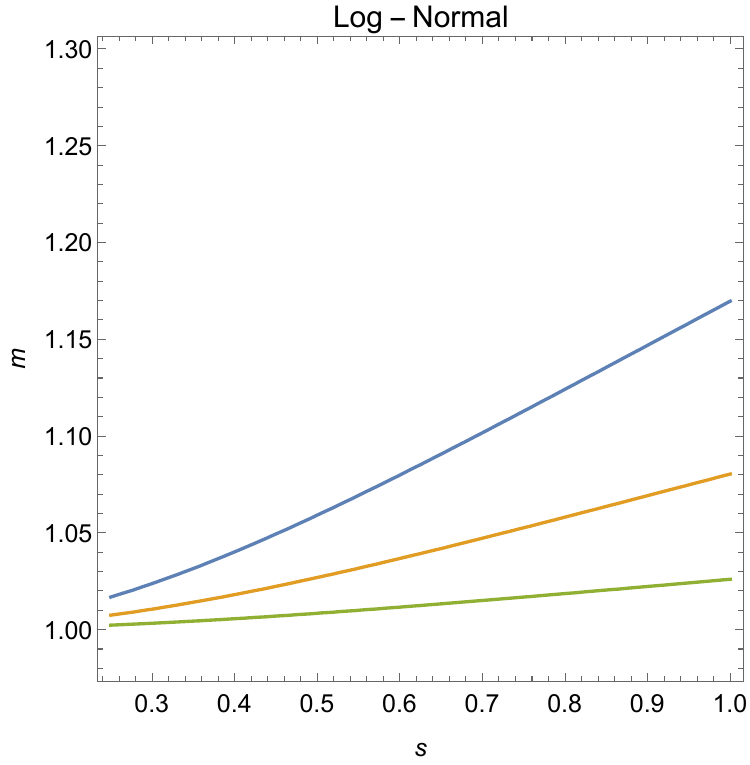}\quad
	\includegraphics[width=0.31\linewidth]{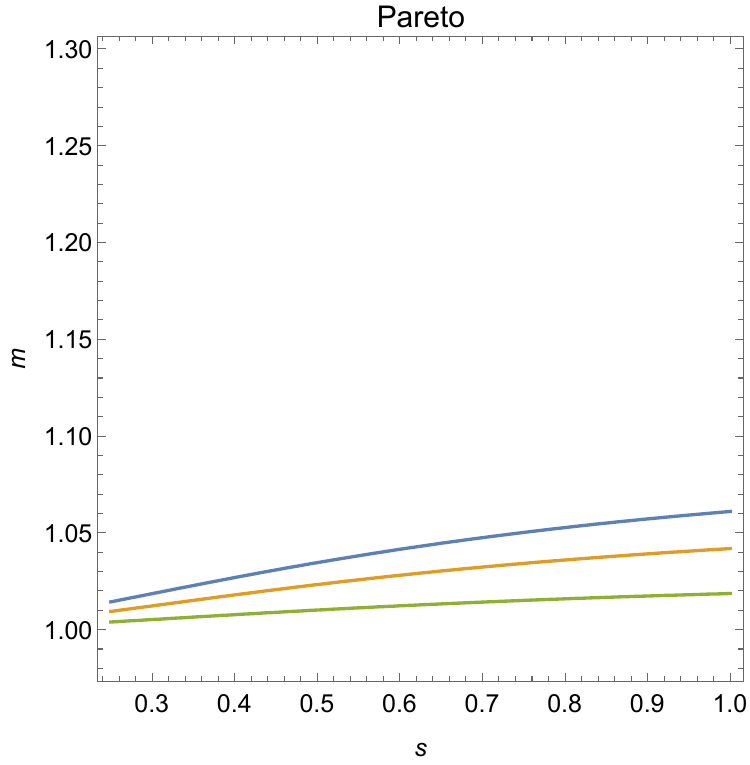}\quad
	\includegraphics[width=0.31\linewidth]{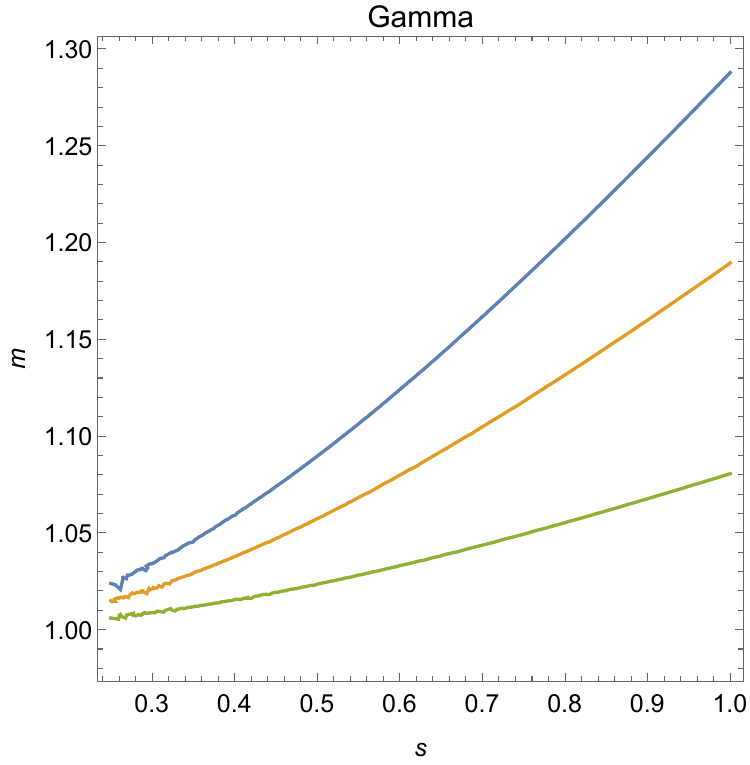}
		\caption{Attainable frontiers (mean-standard deviation trade-off) for different distributions and $A = 0.2$ (blue), $A = 0.5$ (orange), and $A = 0.8$ (green) (with $x =1$ throughout): log-normal, Pareto, and gamma distributions.}
	\end{figure}

\subsection{Comparison with Related Results}

The literature on the optimal timing of an asset sale broadly comprises two different streams. On the one hand, there is the idea of having a universal criterion which everybody can agree on, similar to the Kelly criterion in classical portfolio optimization. The idea here is to chase the running maximum on a finite time horizon. While the maximum occurs at a random time that is not a stopping time, one tries to approximate it by a stopping time. The criterion proposed in Shiryaev, Xu and Zhou \cite{SXZ08} is to minimize the relative error, i.e., $\inf_{0\leq \tau\leq T} \E\bigl[\frac{M_T - X_\tau}{M_T}\bigr]$, for a geometric Brownian motion $X$ with maximum process $M$, $M_t = \sup_{0 \leq s \leq t} X_s$. Du Toit and Peskir \cite{dTP09} study the equivalent problem $\sup_{0\leq \tau\leq T} \E\bigl[\frac{X_\tau}{M_T}\bigr]$ as well as the complementary $\inf_{0\leq \tau\leq T} \E\bigl[\frac{M_T}{X_\tau}\bigr]$. They find that for the former problem the optimal solution is to sell immediately if $\mu < \sigma^2/2$ and to hold until the end if $\mu > \sigma^2/2$. For the latter problem, however, besides selling immediately for low $\mu$ and holding for high $\mu$, there is an intermediate region of $0 < \mu < \sigma^2$ in which the optimal stopping time is a time where $M_t/X_t$ hits a barrier. This barrier is described qualitatively as a finite, decreasing function.

Literature that takes the seller's preferences into account includes Henderson \cite{Hen12}, who considers the question of optimal timing of a sales operation in a prospect-theoretical context. Similar to our general model she works with diffusions, and points out that the main behavioral difference depends on whether the domain of the scale function extends on the left to $-\infty$ or not. The optimal strategy depends also on the parameter $A$, however the solution is more complicated and depends on the utility function as well. Pedersen and Peskir \cite{PP16} take a mean-variance approach to the optimal asset sale in a geometric Brownian motion model. Similar to our paper, they find that if $\mu \geq \sigma^2/2$ it is best to never sell, if $\mu \leq 0$ to sell immediately and in between the static optimum is to sell when the asset hits a constant barrier depending on the variable of the mean-variance trade-off. Leung and Wang \cite{LW19} consider the problem of optimally selling in a geometric Brownian motion (as well as exponential Ornstein--Uhlenbeck) model according to expected (discounted) utility criteria. Besides the cases of immediate sale or infinite holding, the optimal selling occurs at the hitting time of a constant barrier (depending on the utility functions). Most importantly, in the case of preferences with power utility they conclude that depending on the risk-aversion parameter, either immediate selling or indefinite holding are optimal. We note in particular that in the last two works discussed, the barrier hitting times are not a.s. finite if the barrier is above the initial asset level. Under the given parameter configuration, the geometric Brownian motion converges to $0$ at infinity, only a zero contribution to the utility/mean-variance criterion is recorded. This stands in stark contrast to our approach where we require stopping times to be a.s. finite. In the intermediate geometric Brownian motion case in \cite{PP16} as well as in \cite{LW19} when $\mu<\sigma^2/2$, an upper threshold may fail to be reached and the limiting contribution on non-hitting is taken as zero. Those defective stopping times contrast with our requirement that stopping times be almost surely finite.

\section{General Theory -- Diffusion Processes}\label{sec:gen}

We assume that the (subjectively) discounted asset process $X$ follows a (potentially explosive) diffusion process on the interval $I =(\ell,\varrho)$, $-\infty \leq \ell <\varrho \leq \infty$,
\begin{equation}\label{diffusion}
\ud X_t = \mu(X_t) \, \ud t + \sigma(X_t) \, \ud W_t, \qquad X_0 =x, 
\end{equation}
up to $e = \inf\{ t >0 \, : \, X_t \notin (\ell, \varrho)\}$ where we assume that $\mu, \sigma$ are Borel-measurable, $\sigma >0$ and $\frac{\mu}{\sigma^2}$, $\frac{1}{\sigma^2}$ are locally integrable everywhere on $I$. This guarantees the existence of a weak solution to the SDE \eqref{diffusion} up to explosion (see \cite[Section 5.5]{KS91}).  We assume that the boundaries are absorbing, i.e., after the explosion we have $X_t = X_e$, $t > e$.  
\begin{definition}
	The \emph{scale function} of the diffusion process $X$ is given as
		\[
		\varsigma(z) := \int_x^z e^{-2\int_x^y \frac{\mu(u)}{\sigma^2(u)}\, \ud u}\, \ud y
		\]
		for $x = X_0$. The \emph{scaled mean} of a distribution is defined as  $m_F^\varsigma = \int \varsigma(z) \, F(\ud z)$ provided the integral converges. We write just $m_F$ for the unscaled mean (i.e., $\varsigma(z) = z$ for all $z \in I$).
\end{definition}

From the general theory of the Skorokhod embedding theorem, we have the following:

\begin{theorem}\label{thm:sko-gen}
	Let $F$ be a distribution with support within $(\ell, \varrho)$. $F$ is attainable if and only if one of the following cases is true:
	\begin{itemize}
		\item[i)] $-\infty = \varsigma(\ell) < \varsigma(\varrho) = \infty$;
		\item[ii)] $-\infty < \varsigma(\ell) < \varsigma(\varrho) = \infty$, $m_F^\varsigma$ exists and $m_F^\varsigma \leq \varsigma(x)=0$;
		\item[iii)] $-\infty = \varsigma(\ell) < \varsigma(\varrho) < \infty$, $m_F^\varsigma$ exists and $m_F^\varsigma \geq \varsigma(x)=0$;
		\item[iv)] $-\infty < \varsigma(\ell) < \varsigma(\varrho) < \infty$, $m_F^\varsigma$ exists and $m_F^\varsigma = \varsigma(x)=0$.
	\end{itemize} 
\end{theorem}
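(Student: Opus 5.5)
The plan is to reduce everything to the embedding problem for a continuous local martingale. Set $Y_t := \varsigma(X_t)$. Since $\varsigma$ is the scale function with $\varsigma(x)=0$, Itô's formula (valid under the local integrability assumptions, \cite[Section 5.5]{KS91}) shows that $Y$ is a continuous local martingale on $(\varsigma(\ell),\varsigma(\varrho))$ started at $0$. Moreover, $Y$ is a time change of a Brownian motion $B$, $Y_t = B_{\langle Y\rangle_t}$, run until $B$ exits $J:=(\varsigma(\ell),\varsigma(\varrho))$. Because $\varsigma$ is a strictly increasing bijection $I \to J$, $F$ is attainable for $X$ if and only if the push-forward $\nu := F\circ\varsigma^{-1}$, supported in the open interval $J$, can be embedded in $Y$ by an a.s. finite stopping time.

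Next I translate stopping times of $Y$ into stopping times of $B$ before $H_J:=\inf\{s: B_s\notin J\}$, and conversely. A stopping time $\tau<e$ of $X$ with $X_\tau\sim F$ gives $\rho=\langle Y\rangle_\tau < H_J$, and $\tau<\infty$ a.s. corresponds to $\rho<\infty$ a.s., because $\langle Y\rangle_t\to\infty$ only on paths that do not converge to a boundary. Conversely, given $\rho< H_J$, $\rho<\infty$ a.s. with $B_\rho\sim\nu$, the time $\tau:=\inf\{t:\langle Y\rangle_t\ge\rho\}$ is finite. Here I will use that, since the target support lies strictly inside $I$, $X$ cannot reach $\ell,\varrho$ before $\tau$, so $\langle Y\rangle$ increases continuously to $\rho$.

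The problem then becomes a Brownian embedding problem by case.

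\textbf{Case i).} $J=\mathbb{R}$, so no integrability is needed. Any $\nu$ on $\mathbb{R}$ is embeddable by an a.s. finite stopping time: take a non-UI embedding, e.g. first run $B$ to hit an independent point with law $\nu$, which is a.s. finite by recurrence.

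\textbf{Case ii).} $J=(a,\infty)$ with $a>-\infty$. For necessity, $B_{\rho\wedge t}\ge a$ and Fatou's lemma give $\E[B_\rho]\le 0$, with $\E[B_\rho^-]<\infty$ because $B\ge a$; hence $m^\varsigma_F$ exists and is $\le0$. For sufficiency when $m:=\int y\,\nu(\ud y)\le 0$, first run $B$ until it hits the level $m$, which happens a.s. since $m\in(a,0]$ and happens before $H_J$. Then apply the Azéma--Yor (or Chacon--Walsh) embedding of $\nu$, which is centred at $m$, for the restarted Brownian motion. This embedding is UI and has range within the convex hull of the support, so it stays inside $J$. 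If $m=-\infty$ were allowed, existence would fail, consistent with the statement's requirement that $m^\varsigma_F$ exist.

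\textbf{Case iii).} This is symmetric to case ii) under the reflection $y\mapsto -y$.

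\textbf{Case iv).} $J$ is bounded, so $B_{\rho\wedge t}$ is bounded and optional stopping forces $\E[B_\rho]=0$. Conversely, when the mean is $0$, the Azéma--Yor embedding stays within $[\inf\operatorname{supp}\nu,\sup\operatorname{supp}\nu]$, hence within the closure of $J$. Since $\nu$ charges only the open interval $J$, the stopped path a.s. never exits $J$.

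The main obstacle I expect is the careful time-change step in the presence of explosion and a.s. finiteness. Concretely, I need to show that $\langle Y\rangle_\infty=H_J$ in the appropriate sense, using Feller's test: in case i) $X$ is recurrent and non-exploding, while in the other cases $Y$ converges to the finite boundary. I also need to show that a $B$-stopping time that is finite and strictly before $H_J$ pulls back to a finite $X$-stopping time with respect to the right filtration. For the latter I would enlarge the filtration with an independent randomization if necessary, or else invoke a non-randomized embedding such as Azéma--Yor throughout.
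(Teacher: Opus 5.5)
Your argument is correct, but it does not follow the paper's route. The paper gives no argument of its own and simply invokes \cite[Theorem 2.1]{PP01} together with \cite[Lemma 9]{CH04}. You instead give a self-contained version of the reduction on which those results rest. You pass to natural scale $Y=\varsigma(X)$ and identify the Brownian exit time $H_J$ with $\langle Y\rangle_{e-}$ via the boundary classification. You then settle each case with Brownian tools: Fatou on the half-bounded interval, bounded convergence on the bounded one, and for sufficiency ``hit the level $m_F^\varsigma$, then Az\'{e}ma--Yor''. The citation is short and leaves the technicalities (explosion, filtrations, non-randomized stopping times) to the literature. Your version shows why finiteness of $\varsigma(\ell)$ and $\varsigma(\varrho)$ turns into a one- or two-sided moment condition. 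Your sufficiency construction is essentially the extended barycenter embedding $\Psi_F^{\varsigma,\pm}$ that the paper states later. In case ii) you use the maximum-based rule after hitting the mean, where the paper uses the minimum-based one; this makes no difference for existence. Your worry about pulling back $B$-stopping times is also moot for these explicit rules. Hitting times and Az\'{e}ma--Yor rules are path functionals of $Y$ that the time change does not affect.

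The one point to tighten is case i). The independent target point needs an enlarged filtration. Your fallback ``Az\'{e}ma--Yor throughout'' does not cover this case, because $\nu$ need not have a first moment there. You can instead build the randomization from the path. Let $\rho_1$ be the Az\'{e}ma--Yor time embedding the uniform law on $[-1,1]$, and set $Z:=G^{-1}\bigl((B_{\rho_1}+1)/2\bigr)$ with $G^{-1}$ the quantile function of $\nu$. Then run until $B$ hits $Z$, which happens a.s. by recurrence. A minor slip: in case ii) the excluded value is $m=+\infty$, not $-\infty$, since $\nu$ lives on $(a,\infty)$.
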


\begin{proof}
	This follows from the result of Pedersen and Peskir \cite[Theorem 2.1]{PP01} with a slight extension by Cox and Hobson given in \cite[Lemma 9]{CH04}.
\end{proof}

We can derive the following statement on super-attainability of distributions:
\begin{theorem}
	Let $F$ be a distribution supported on $(\ell, \varrho)$.
	\begin{itemize}
		\item[i)] If $-\infty < \varsigma(\ell)$, $F$ is super-attainable if and only if $m_F^\varsigma$ exists and $m_F^\varsigma \leq \varsigma(x) = 0$.
		\item[ii)] If $-\infty = \varsigma(\ell)$, $F$ is always super-attainable.
	\end{itemize}  
\end{theorem}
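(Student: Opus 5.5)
The plan is to reduce super-attainability to attainability via Theorem \ref{thm:sko-gen}. By definition, $F$ is super-attainable iff there is an attainable $G$ with $G \succeq F$. I will repeatedly use the same monotone coupling. Let $Y \sim F$ and, for $c \in (\ell,\varrho)$, let $G_c$ denote the law of $Y \vee c$. Then $G_c \succeq F$, and $G_c$ is still supported in $(\ell,\varrho)$. Throughout I use that $\varsigma$ is strictly increasing and continuous with $\varsigma(x)=0$.

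\emph{Case i), necessity.} Let $\varsigma(\ell) > -\infty$ and let $G \succeq F$ be attainable. Since $\varsigma$ is bounded below, $\int \varsigma \, \ud F$ and $\int \varsigma \, \ud G$ are well defined in $(-\infty,\infty]$. Because $\varsigma$ is increasing, first-order dominance gives $\int \varsigma\, \ud F \le \int \varsigma \, \ud G$ (e.g.\ via quantile coupling). Theorem \ref{thm:sko-gen} (cases ii) or iv)) gives $m_G^\varsigma \le 0$. Hence $m_F^\varsigma$ exists and is $\le 0$.

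\emph{Case i), sufficiency.} Assume $m_F^\varsigma \le 0$. If $\varsigma(\varrho)=\infty$, then $F$ is itself attainable by case ii) of Theorem \ref{thm:sko-gen}. If $\varsigma(\varrho)<\infty$, then $\varsigma$ is bounded on $I$, and I consider $h(c) := \int \varsigma \, \ud G_c = \E[\varsigma(Y\vee c)]$. By bounded convergence, $h$ is continuous and nondecreasing on $(\ell,\varrho)$. As $c \downarrow \ell$ we have $h(c) \to m_F^\varsigma \le 0$. Moreover $h(c) \ge \varsigma(c) > 0$ for $c > x$. By the intermediate value theorem (taking $c=\ell$ formally, i.e.\ $G=F$, if $m_F^\varsigma = 0$ already) there is $c^*$ with $h(c^*)=0$. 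Then $G_{c^*}$ is attainable by case iv) and dominates $F$.

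\emph{Case ii).} Let $\varsigma(\ell)=-\infty$. If $\varsigma(\varrho)=\infty$, every $F$ is attainable by case i) of Theorem \ref{thm:sko-gen}. If $\varsigma(\varrho)<\infty$, pick any $c \in (x,\varrho)$. Then $\varsigma(c) \le \varsigma(Y\vee c) \le \varsigma(\varrho)$, so $m^\varsigma_{G_c}$ exists and is $\ge \varsigma(c) > 0$. Hence $G_c \succeq F$ is attainable by case iii).

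The main obstacle I expect is the bookkeeping on existence and integrability of the scaled means. In the necessity step one must justify the monotonicity of $\int\varsigma\,\ud F$ under $\succeq$ when the integrals may be $+\infty$. In the case $\varsigma(\varrho)<\infty$ of i), the continuity of $h$ is needed so that an exact zero is hit, despite possible atoms of $F$; bounded convergence handles this. The truncation $Y \vee c$ sidesteps the mixing construction that the paper uses in the geometric Brownian motion case.
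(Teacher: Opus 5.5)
Your proof is correct and takes the paper's route; the paper's proof simply refers back to the geometric Brownian motion case. Necessity uses monotonicity of $\int \varsigma \, \ud F$ under first-order dominance together with Theorem~\ref{thm:sko-gen}, and sufficiency raises the lower tail, with your law of $Y \vee c$ being exactly the paper's truncated distribution $F^c$. Two refinements go beyond the paper: continuity of $c \mapsto \E[\varsigma(Y\vee c)]$ makes the paper's mixing step for atoms unnecessary, and you correctly treat the case $\varsigma(\ell) > -\infty$, $\varsigma(\varrho) < \infty$ separately, where exact equality $m^\varsigma = 0$ is required and which has no counterpart in the geometric Brownian motion proof.
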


\begin{proof}
	The proof is identical to that of Theorem \ref{super-gbm} in the geometric Brownian motion case.
\end{proof}

In the case that all distributions are attainable, there is no need to plan for a sale at a given distribution. On the other hand, in the case $m_F^\varsigma \leq \varsigma(x)$, if the scale function is convex an immediate sale is advised due to uncertainty aversion: all attainable distributions have mean not above the initial asset value. Indeed, for an arbitrary distribution $F$
\[
\varsigma(x) \geq \int \varsigma(z) \, F(\ud z) \geq  \varsigma\biggl( \int z \, F(\ud z) \biggr) = \varsigma(m_F)
\]
by Jensen's inequality, and thus, as $\varsigma$ is increasing by definition, $x \geq m_F$. A direct calculation shows that this is equivalent to $\mu(z) \leq 0$ for all $z \in I$.

Conversely, if $\varsigma$ is strictly concave (equivalent to $\mu(z) > 0$ for all $z \in I$) and $m_F^\varsigma = \varsigma(x)$, then we have that the mean of $F$ is always higher than the current asset price:
\[
\varsigma(x) = \int \varsigma(z) \, F(\ud z) <  \varsigma\biggl( \int z \, F(\ud z) \biggr) = \varsigma(m_F),
\]
and as $\varsigma$ is strictly increasing the result follows. 

We can give a more precise argument when a diffusion admits an attainable distribution with mean above the asset value. The description relies on a notion of \emph{local-global convexity/concavity} about the current asset value $x$, i.e., on the comparison of the slope of the tangent of the scale function at $x$ to the slope of all secants drawn from $x$.

\begin{proposition}\label{precise-mean}
	Let $X$ be a diffusion process with $X_0=x$ and scale function $\varsigma$ satisfying $\varsigma(\ell) > -\infty$.
	\begin{itemize}
		\item[a)] There is no distribution with finite mean that the mean is above the current asset value if for all $\underline{x}, \overline{x} \in (\ell, \varrho)$ with $\underline{x} < x < \overline{x}$ we have
		\begin{equation}\label{loc-glob-conv}
		\frac{\varsigma(x) - \varsigma(\underline{x})}{x -\underline{x}} \leq \varsigma'(x) \leq \frac{\varsigma(\overline{x})-\varsigma(x)}{\overline{x} - x} .
		\end{equation}
		\item[b)] Every optimal distribution $F$ with finite mean has mean strictly above current asset value if there exist $\underline{x}, \overline{x} \in (\ell, \varrho)$ with $\underline{x} < x < \overline{x}$ such that
		\begin{equation}\label{ex-F-ab}
		\frac{\varsigma(x) - \varsigma(\underline{x})}{x -\underline{x}} > \frac{\varsigma(\overline{x})-\varsigma(x)}{\overline{x} - x} .
		\end{equation}
		\item[c)] Every optimal distribution $F$ with finite mean has mean strictly below the current asset value if there exist $\underline{x}, \overline{x} \in (\ell, \varrho)$ with $\underline{x} < x < \overline{x}$ such that
		\begin{equation}\label{ex-F-bel}
        \frac{\varsigma(x) - \varsigma(\underline{x})}{x -\underline{x}} < \frac{\varsigma(\overline{x})-\varsigma(x)}{\overline{x} - x} .		\end{equation}
		\item[d)] All optimal distributions with finite mean have mean of at least the current asset value if for all $\underline{x}, \overline{x} \in (\ell, \varrho)$ with $\underline{x} < x < \overline{x}$ we have
		\begin{equation}\label{loc-glob-conc}
		\frac{\varsigma(x) - \varsigma(\underline{x})}{x -\underline{x}} \geq \varsigma'(x) \geq \frac{\varsigma(\overline{x})-\varsigma(x)}{\overline{x} - x}.
		\end{equation}
		For a non-constant distribution the mean is strictly above the current asset value if one of the two inequalities holds strictly on a set of positive $F$-mass.
		\end{itemize}
\end{proposition}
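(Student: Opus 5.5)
The plan is to work in natural-scale coordinates and reduce every statement to a comparison between $\varsigma$ and the straight line through $(x,\varsigma(x))=(x,0)$. Since $\varsigma(\ell)>-\infty$, Theorem~\ref{thm:sko-gen} (cases ii) and iv)) together with the super-attainability theorem show that every attainable $F$ has $m_F^\varsigma\le 0$. \emph{Optimal} means $m_F^\varsigma=0$.

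\textbf{Parts a) and d).} For a), the two inequalities in \eqref{loc-glob-conv} say that the tangent line $L(z)=\varsigma'(x)(z-x)$ lies below $\varsigma$. For $z<x$ this follows by multiplying the left inequality by $x-z>0$, and for $z>x$ from the right inequality. Integrating against any attainable $F$ with finite mean gives
\[
0\ \ge\ m_F^\varsigma=\int\varsigma\,\ud F\ \ge\ \varsigma'(x)\,(m_F-x).
\]
Since $\varsigma'(x)>0$, this yields $m_F\le x$. For d), \eqref{loc-glob-conc} gives $\varsigma\le L$ pointwise, so for optimal $F$ we get $0=m_F^\varsigma\le\varsigma'(x)(m_F-x)$, hence $m_F\ge x$. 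If one of the two inequalities is strict on a set of positive $F$-mass, then $\varsigma<L$ on that set and the inequality becomes strict, giving $m_F>x$.

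\textbf{Parts b) and c).} To treat every optimal $F$ at once, I decompose $F$ into two-point laws. Write $F^-$ and $F^+$ for the restrictions of $F$ to $(\ell,x)$ and $(x,\varrho)$; the atom at $x$ contributes nothing to either side. Optimality, $\int\varsigma\,\ud F=0$, says that $\int(-\varsigma)\,\ud F^-=\int\varsigma\,\ud F^+=:K$. Then
\[
F=F(\{x\})\delta_x+\frac1K\iint \bigl(\varsigma(v)\,\delta_u-\varsigma(u)\,\delta_v\bigr)\,F^-(\ud u)F^+(\ud v),
\]
which is a mixture of the two-point laws $\nu_{u,v}$ on $\{u,v\}$ with weights proportional to $\varsigma(v)$ at $u$ and $-\varsigma(u)$ at $v$. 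Each $\nu_{u,v}$ has scaled mean $0$, and an elementary computation gives
\[
m_{\nu_{u,v}}-x=\frac{(x-u)(v-x)}{\varsigma(v)-\varsigma(u)}\biggl(\frac{\varsigma(x)-\varsigma(u)}{x-u}-\frac{\varsigma(v)-\varsigma(x)}{v-x}\biggr).
\]
Hence $m_F-x$ is a positive mixture of the left-minus-right secant-slope differences over pairs $(u,v)$ in the support of $F$. Consequently, if \eqref{ex-F-ab} holds for the pairs charged by $F^-\otimes F^+$, then $m_F>x$, and the reversed inequality \eqref{ex-F-bel} gives $m_F<x$. The same computation also re-derives a) and d). For a single pair $(\underline{x},\overline{x})$ it shows that the optimal law $\nu_{\underline{x},\overline{x}}$ has mean above (respectively below) $x$, so the condition is sharp at the level of two-point laws.

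\textbf{Main obstacle.} The difficulty is passing from the hypothesis as written, which provides a single pair $(\underline{x},\overline{x})$, to the conclusion for \emph{every} optimal $F$, including laws whose support avoids that pair. The mixture formula shows that the sign of $m_F-x$ is controlled only by the secant slopes on $\operatorname{supp}F^-\times\operatorname{supp}F^+$. A single pair therefore fixes the sign only for laws built on pairs where the inequality persists, for instance by continuity, on small neighbourhoods of $(\underline{x},\overline{x})$. I would therefore try first to prove b) and c) with the secant inequality required for all $\underline{x}<x<\overline{x}$, which is the strict counterpart of d) and a) respectively. I would then check whether the one-pair wording can be rescued by an additional regularity assumption on $\varsigma$. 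If it cannot, I would exhibit a $\varsigma$ with a local concave wiggle and an optimal $F$ supported away from it, to show that the universal claim needs the all-pairs hypothesis.
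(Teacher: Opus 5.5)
Your proposal is correct, and your diagnosis of b) and c) is the key point. Parts a) and d) are the paper's tangent-line argument. Using $m_F^\varsigma\le 0$ directly for every attainable $F$ is slightly cleaner than the paper's preliminary reduction to optimal laws.

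For b), the paper only writes down the two-point law with mass $\frac{\varsigma(\overline{x})-\varsigma(x)}{\varsigma(\overline{x})-\varsigma(\underline{x})}$ at $\underline{x}$, which is your $\nu_{\underline{x},\overline{x}}$. It then checks that this law is optimal with mean above $x$, and c) is the mirror image. So what is actually proved is only that \emph{some} optimal law has mean strictly above (below) $x$.

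You need not look for a regularity assumption that rescues the universal wording, because it is false. Trivially, $\delta_x$ is optimal with mean exactly $x$. Moreover, the hypotheses of b) and c) can hold at the same time:
\begin{itemize}
\item Let $\varsigma'(z)=e^{z-x}$ for $z\le x+1$, so that $\varsigma(-\infty)=-1$.
\item Continue $\varsigma'$ smoothly and positively so that $\varsigma'\equiv\varepsilon<\tfrac12$ for large $z$; this is realised by $\sigma\equiv1$ and $\mu=-\tfrac12(\log\varsigma')'$.
\item The pair $(x-\tfrac12,x+\tfrac12)$ satisfies \eqref{ex-F-bel}, by strict convexity of $\varsigma$ there.
\item The pair $(x-\tfrac12,\overline{x})$ satisfies \eqref{ex-F-ab} for large $\overline{x}$. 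The left slope is $2(1-e^{-1/2})\approx 0.79$, while the right slope tends to $\varepsilon$.
\end{itemize}
As stated, b) and c) would then force every optimal law to have mean both above and below $x$. Since this $\varsigma$ is $C^\infty$, smoothness cannot help.

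Where you go beyond the paper is the mixture identity
\[
m_F-x=\frac{1}{K}\iint (x-u)(v-x)\,\Delta(u,v)\,F^-(\ud u)\,F^+(\ud v),
\]
where $\Delta(u,v)$ is the left-minus-right secant-slope difference. I checked it. The Fubini step is harmless, because $|\varsigma(v)u-\varsigma(u)v|\le\varsigma(v)|u|-\varsigma(u)|v|$ integrates to at most $K\int|z|\,F(\ud z)$.

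This identity gives the correct universal versions of b) and c). Every non-degenerate optimal $F$ has $m_F>x$ (resp.\ $m_F<x$) whenever $\Delta>0$ (resp.\ $\Delta<0$) holds $F^-\otimes F^+$-a.e. In particular this covers your all-pairs hypothesis. The identity also recovers d) and the optimal-law case of a), including the strictness clause of d).

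Combined with the paper's two-point construction, it even shows that the one-pair condition is necessary and sufficient for the \emph{existence} of an optimal law with mean strictly above (below) $x$. That is presumably what b) and c) were meant to assert.

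Two small points for the write-up. First, treat $K=0$, i.e.\ $F=\delta_x$, separately. Second, keep the tangent-line argument for a) when $m_F^\varsigma<0$, because the decomposition relies on $m_F^\varsigma=0$.
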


\begin{proof}
	\begin{itemize}
		\item[a)] By an argument as in the proof of Theorem \ref{super-gbm} we can restrict ourselves to the case of optimal distributions. Assume \eqref{loc-glob-conv} holds. Then, 
		\begin{align*}
		0 &= \int \varsigma(z) \, F(\ud z) \geq \int \Bigl(\varsigma(x) + \varsigma'(x)(z-x)\Bigr) \, F(\ud z) = \varsigma'(x) \biggl(\int z \, F(\ud z) - \int x \, F(\ud z)\biggr)
		\\ &= \varsigma'(x) \bigl(m_F - x)
		\end{align*}
		and thus $m_F \leq x$ as the scale function is increasing.
		\item[b)] Define now the distribution function
		\[
		F (z) = \left\{ \begin{array}{ll} 
		0 & \text{ if } z < \underline{x},\\
		\frac{\varsigma(\overline{x}) -\varsigma(x)}{\varsigma(\overline{x})-\varsigma{(\underline{x})}} & \text{ if } \underline{x} \leq z < \overline{x},\\
		1 & \text{ if } z \geq \overline{x}.
		\end{array} \right.
		\]
		We have then
		\[
		\int \varsigma(z) \, F(\ud z)  = \frac{\varsigma(\overline{x})-\varsigma(x)}{\varsigma(\overline{x})-\varsigma{(\underline{x})}}\varsigma(\underline{x}) + \frac{\varsigma(x)-\varsigma(\underline{x})}{\varsigma(\overline{x})-\varsigma{(\underline{x})}} \varsigma(\overline{x}) = \varsigma(x)
		\]
		thus $F$ is optimal. Moreover,
		\begin{align*}
		\int z \, F(\ud z) & = \frac{\varsigma(\overline{x})-\varsigma(x)}{\varsigma(\overline{x})-\varsigma{(\underline{x})}}\underline{x} + \frac{\varsigma(x)-\varsigma(\underline{x})}{\varsigma(\overline{x})-\varsigma{(\underline{x})}} \overline{x} >x
		\end{align*}
		proving the statement.
		\item[c)] This works exactly as in b), with the inequality reversed. 
		\item[d)] The proof is the converse of a). As the scale function is strictly increasing and every non-constant distribution has to take values on both sides of the mean with positive probability, the inequality is strict if at least one of the inequalities in the condition is strict on a set in the support of $F$.
	\end{itemize}
\end{proof}

A different characterization when immediate sale is (not) advisable can be given in terms of a space-time average of the drift of the asset price dynamics. 
\begin{proposition}
	Assume that the local martingale part $M$ of the Doob--Meyer decomposition $X = M + A$ of $X$ is a martingale of class (D). Then there is an attainable distribution $F$ with mean $m_F$ above the current asset value $x$ if there exists a stopping time $\tau$ with $\mathbb{E}\bigl[\int_0^\tau \mu\bigl(X_s\bigr) \, \ud s \bigr] > 0.$ Vice versa, if for all stopping times $\tau$ we have $\mathbb{E}\bigl[\int_0^\tau \mu\bigl(X_s\bigr) \, \ud s \bigr] \leq 0$, the mean of all attainable distributions is at or below the current asset price.
\end{proposition}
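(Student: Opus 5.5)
The argument rests on the decomposition $X_\tau = x + M_\tau + \int_0^\tau \mu(X_s)\,\ud s$, where $M_t = \int_0^t \sigma(X_s)\,\ud W_s$ is the local martingale part. Because $M$ is a martingale of class (D), optional stopping gives $\E[M_\tau]=0$ for every stopping time $\tau$. We restrict to a.s. finite $\tau$, or use the closure $M_\infty$ that class (D) provides. Taking expectations then yields the identity
\[
\E[X_\tau] = x + \E\Bigl[\int_0^\tau \mu(X_s)\,\ud s\Bigr].
\]
This needs the drift integral to be well defined in expectation. We read the hypothesis as asserting that this expectation exists, and otherwise we localise with $\tau\wedge\tau_n$ and pass to the limit using uniform integrability of $M$.

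For the first claim, take a stopping time $\tau$ with $\E\bigl[\int_0^\tau \mu(X_s)\,\ud s\bigr]>0$ and let $F$ be the law of $X_\tau$. By construction $F$ is attainable, since it is realised by the a.s. finite stopping time $\tau$. The identity above gives $m_F = \E[X_\tau] > x$. Here we insist that $\tau$ be a.s. finite and that $X_\tau$ lie in $(\ell,\varrho)$, so that $F$ is supported where Theorem \ref{thm:sko-gen} and the attainability definition require. If $\tau$ is only finite with positive probability, we replace it by $\tau\wedge n$ for large $n$. The convergence $\E[\int_0^{\tau\wedge n}\mu\,\ud s]\to\E[\int_0^\tau\mu\,\ud s]$, by monotone or dominated convergence applied to the positive and negative parts, keeps the expectation positive for some finite $n$.

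For the converse, let $F$ be any attainable distribution. By definition there is an a.s. finite $\tau$ with $X_\tau\sim F$. The identity gives $m_F = x + \E\bigl[\int_0^\tau \mu(X_s)\,\ud s\bigr] \le x$ by hypothesis.

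The main obstacle is justifying $\E[M_\tau]=0$ together with the integrability of $\int_0^\tau\mu\,\ud s$ for unbounded $\tau$, and ensuring the diffusion is not absorbed at a boundary. Class (D) is exactly the condition that makes $\{M_{\tau}\}$ uniformly integrable over all stopping times, so optional sampling applies. The drift term then inherits integrability from $X_\tau - x - M_\tau$ whenever $F$ has a finite mean. When $m_F$ is infinite, the inequality $m_F\le x$ fails trivially in the converse direction. That case must be excluded by showing $\E[X_\tau]<\infty$ via the same identity, using the nonpositivity of the drift expectation. This is the technical point I would handle first, via localisation of $M$ and Fatou's lemma.
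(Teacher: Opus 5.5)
Your proposal is correct and follows the paper's proof. Both rest on the identity $m_F = \E[X_\tau] = x + \E\bigl[\int_0^\tau \mu(X_s)\,\ud s\bigr]$, obtained from optional stopping for the class (D) martingale part. It is applied to the law of $X_\tau$ in the first direction and to a stopping time realising an attainable $F$ in the converse; the paper phrases the converse as a contrapositive, which is the same argument. Your extra care goes beyond what the paper writes but does not change the route: you truncate to $\tau\wedge n$ when $\tau$ is not a.s. finite, and you rule out $m_F=+\infty$ using the integrability of $M_\tau$.
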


\begin{proof}
	We note that if such $\tau$ exists, we have for $F$ such that $X_\tau  \sim F$
	\[
	m_F = \int z \, F(\ud z) = \mathbb{E}\bigl[ X_\tau\bigr] = \mathbb{E}\biggl[ x + \int_0^\tau \mu\bigl(X_s\bigr) \, \ud s + \int_0^\tau \sigma\bigl(X_s\bigr) \, \ud W_s\biggr] = x + \mathbb{E}\biggl[\int_0^\tau \mu\bigl(X_s\bigr) \, \ud s\biggr] > x.
	\]
	The other way round, if $F$ is attainable with $m_F>x$, by definition there exists a $\tau$ with $X_\tau \sim  F$ and 
	\[
\mathbb{E}\biggl[\int_0^\tau \mu\bigl(X_s\bigr) \, \ud s\biggr] = \mathbb{E}\bigl[ X_\tau\bigr] - \mathbb{E}\biggl[ x + \int_0^\tau \sigma\bigl(X_s\bigr) \, \ud W_s\biggr] = \int z \, F(\ud z) - x = m_F - x>0.
\]
The second statement follows analogously.
\end{proof}

From this we can derive an easier-to-check sufficient condition when immediate sale is suboptimal or optimal:

\begin{corollary}
	If $\mathbb{E}\bigl[\int_0^\tau \sigma(X_s)^2\, \ud s \bigr] < \infty$ and $\mathbb{E}\bigl[\int_0^\tau \mu(X_s)\, \ud s \bigr] \geq 0$ for all stopping times $\tau<\infty$ a.s. an there exists an a.s. finite stopping time $\tau$ with $\mathbb{E}\bigl[\int_0^\tau \mu(X_s)\, \ud s \bigr] > 0$, then $F \sim X_\tau$ is attainable and $m_F > x$. Vice versa, if $\mathbb{E}\bigl[\int_0^\tau \sigma(X_s)^2\, \ud s \bigr] < \infty$ and $\mathbb{E}\bigl[\int_0^\tau \mu(X_s)\, \ud s \bigr] \leq 0$ for all $\tau <\infty$ a.s. we have $m_F \leq x$.
\end{corollary}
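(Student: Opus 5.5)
The corollary follows from the preceding proposition once we show that the condition $\E\bigl[\int_0^\tau \sigma(X_s)^2\,\ud s\bigr]<\infty$ makes the stochastic integral part a true martingale on $[0,\tau]$, so that its expectation vanishes. I read the hypothesis as holding for every a.s. finite stopping time $\tau$ under consideration.

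\textbf{Step 1 (martingale property).} For a fixed a.s. finite stopping time $\tau$, set $N_t := \int_0^{t\wedge\tau} \sigma(X_s)\,\ud W_s$. By the It\^o isometry and $\E\bigl[\int_0^\tau \sigma(X_s)^2\,\ud s\bigr]<\infty$, the process $N$ is an $L^2$-bounded martingale. It therefore converges in $L^2$ to $N_\infty = \int_0^\tau \sigma(X_s)\,\ud W_s$, and $\E[N_\infty]=0$. In particular $N$ is uniformly integrable, hence of class (D). This is exactly the integrability used in the proof of the proposition, so the identity
\[
\E[X_\tau] = x + \E\biggl[\int_0^\tau \mu(X_s)\,\ud s\biggr]
\]
holds. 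For this we need $X_\tau$ integrable. That follows from the decomposition whenever $\int_0^\tau \mu(X_s)\,\ud s$ is integrable. Otherwise we interpret the expectation as an extended value, which is consistent with the sign hypotheses.

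\textbf{Step 2 (first claim).} Take the a.s. finite $\tau$ with $\E\bigl[\int_0^\tau\mu(X_s)\,\ud s\bigr]>0$ and let $F$ be the law of $X_\tau$. By definition, $F$ is attainable because $\tau$ is a.s. finite and $X_\tau\sim F$. The identity from Step 1 then gives $m_F = x + \E\bigl[\int_0^\tau\mu(X_s)\,\ud s\bigr] > x$. The hypothesis that the drift integral is nonnegative for all $\tau$ is not needed for this direction. It only matters for the claim that immediate sale is suboptimal globally.

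\textbf{Step 3 (converse).} Let $F$ be attainable. Then there is an a.s. finite $\tau$ with $X_\tau\sim F$. Applying Step 1 to this $\tau$ gives $m_F - x = \E\bigl[\int_0^\tau \mu(X_s)\,\ud s\bigr]\le 0$.

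\textbf{Main obstacle.} The delicate point is justifying that the stochastic integral has zero expectation on $[0,\tau]$ when $\tau$ is unbounded. This is exactly where the $L^2$ condition replaces the class (D) assumption of the proposition. Care is also needed at a possible explosion time $e$. If $X$ is absorbed at a boundary before $\tau$, the integrands vanish after $e$, so the integrals are unaffected. I would state this explicitly.
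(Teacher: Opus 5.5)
Your proposal is correct and matches the paper's approach. The paper states this corollary without a separate proof, as a direct consequence of the preceding proposition. The only ingredient it leaves implicit is your Step 1: $\E\bigl[\int_0^\tau\sigma(X_s)^2\,\ud s\bigr]<\infty$ makes the stopped stochastic integral an $L^2$-bounded, hence uniformly integrable, martingale with zero mean, so that $m_F = x + \E\bigl[\int_0^\tau \mu(X_s)\,\ud s\bigr]$ and both directions follow immediately.
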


We also want to provide an explicit stopping rule that attains a targeted optimal sale price distribution. This can be done by a generalization of the Az\'{e}ma--Yor embedding \cite{AY79a, AY79b} due to Pedersen and Peskir \cite{PP01}: For $m_F^\varsigma \geq \varsigma(x)=0$, define the extended scaled barycenter function $\Psi_F^{\varsigma,+}$ of the right tail of $F$ (supported on $(\ell, \varrho)$) by setting $\Psi_F^{\varsigma,+} (x)=-\infty$ when $x \leq \varsigma^{-1}(m_F^\varsigma)$, $\Psi_F^{\varsigma,+} (x) =x$ for $x \geq \varsigma^{-1}(\beta)$ with $\beta := \sup\bigl\{y \in \R \, : \, F\bigl(\varsigma^{-1}(y-)\bigr) < 1\bigr\}$ and
\[
\bigl(\Psi_F^{\varsigma,+}\bigr)^{-1} (z) = \varsigma^{-1}\Biggl( \frac{1}{1-F(z-)}\int_{[z,\varrho)} \varsigma(u) \, F( \ud u) \Biggr)
\]
for $\varsigma^{-1}(m_F^\varsigma) < z < \varsigma^{-1}(\beta)$, where, as usual, $F(z-) := \lim_{\varepsilon \downarrow 0} F(z-\varepsilon)$. Then we define the stopping time $\tau^*$ as the first hitting time
\begin{align*}
\tau^* & =\inf\bigl\{ t > 0 \, : \, \Psi_F^{\varsigma,+}(M_t) \geq X_t\bigr\}\\
& =\inf\bigl\{ t > 0 \, : \, (M_t, X_t) \in \mathcal{D}_F \bigr\}, \quad \mathcal{D}_F := \bigl\{(m,x) \in (\ell,\varrho)^2 \, : \, \Psi_F^{\varsigma,+}(m) \geq x\bigr\}
\end{align*}
where $M_t = \sup_{s \leq t} X_s$.
In the case that $m_F^\varsigma \leq \varsigma(x)=0$, the corresponding object is the extended scaled barycenter function of the left tail of $F$ (supported on $(\ell, \varrho)$) defined by$\Psi_F^{\varsigma,-} (x)=+\infty$ when $x \geq \varsigma^{-1}(m_F^\varsigma)$, $\Psi_F^{\varsigma,-} (x) =x$ for $x \leq \varsigma^{-1}(\alpha)$ with $\alpha := \inf\bigl\{y \in \R \, : \, F\bigl(\varsigma^{-1}(y)\bigr) > 0\bigr\}$ and
\[
\bigl(\Psi_F^{\varsigma,-}\bigr)^{-1} (z) = \varsigma^{-1}\Biggl( \frac{1}{F(z)}\int_{(\ell,z]} \varsigma(u) \, F( \ud u) \Biggr)
\] 
for $\varsigma^{-1}(\alpha) < z < \varsigma^{-1}(m_F^\varsigma)$. $\tau^*$ is the first hitting time
\begin{align*}
\tau^* & =\inf\bigl\{ t > 0 \, : \, \Psi_F^{\varsigma,-}(m_t) \leq X_t\bigr\}\\
& =\inf\bigl\{ t > 0 \, : \, (m_t, X_t) \in \mathcal{D}_F \bigr\}, \quad \mathcal{D}_F := \bigl\{(m,x) \in (\ell,\varrho)^2 \, : \, \Psi_F^{\varsigma,-}(m) \leq x \bigr\}
\end{align*}
where $m_t = \inf_{s \leq t} X_s$.

This embedding has the nice additional property that it maximizes the maximum (resp. minimizes the minimum) over all stopping times satisfying $X_\tau \sim F$ stochastically (in first-order dominance),
\[
\P\bigl[\sup_{s \leq \tau^*} X_s \geq z \bigr] \geq \P\bigl[\sup_{s \leq \tau} X_s \geq z\bigr],
\]
in a suitable class, e.g., the stopping times with $\mathbb{E}\bigl[\sup_{0\leq s \leq \tau} \vert \varsigma(X_s)\vert\bigr]< \infty$ (Pedersen and Peskir \cite[Proposition 3.1]{PP01}) or the class of minimal stopping times (Cox and Hobson \cite[Corollary 17]{CH06}). 

Finally, while no closed form for the expected stopping time is available, we can give for the case $m_F^\varsigma =0$ the following description via the solution of an ordinary differential equation.

\begin{proposition}
	Let $g$ be a solution to the initial value problem
	\begin{align}\label{Dynkin-ODE}
	\left\{ \begin{array}{rl} \mu(z) g'(z) + \frac{\sigma^2(z)}{2}g''(z) & =1,\\
	g(x) & = 0.\end{array}\right.
	\end{align}
	Then, if finite,  the expected stopping time $\tau$ for the sale is given by
	\[
	\E[\tau] = \int g(z) \, F(\ud z).
	\]
\end{proposition}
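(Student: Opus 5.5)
The plan is to apply Dynkin's formula (It\^o's formula) to $g(X_t)$ up to the stopping time $\tau=\tau^*$ and then use $X_\tau\sim F$. Since $g$ solves \eqref{Dynkin-ODE}, the generator $\mathcal{L}=\mu\,\partial_z+\frac{\sigma^2}{2}\partial_z^2$ satisfies $\mathcal{L}g\equiv 1$. Under the standing assumptions ($\mu/\sigma^2$ and $1/\sigma^2$ locally integrable), the solution $g$ is $C^1$ with absolutely continuous derivative, so the generalized It\^o formula (It\^o--Tanaka, or the occupation times formula) applies. For $t\le\tau$ it gives
\[
g(X_{t}) = g(x) + \int_0^{t} \mathcal{L}g(X_s)\,\ud s + \int_0^{t} g'(X_s)\sigma(X_s)\,\ud W_s = t + \int_0^{t} g'(X_s)\sigma(X_s)\,\ud W_s .
\]
Concretely, $g$ can be written through the scale function: $g'(z)=\varsigma'(z)\int_x^z \frac{2}{\sigma^2(y)\varsigma'(y)}\,\ud y$ plus a multiple $c\,\varsigma'(z)$. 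The homogeneous part $c\,\varsigma$ is harmless here, because $\int\varsigma\,\ud F=m_F^\varsigma=0$. This is exactly why the statement is restricted to the case $m_F^\varsigma=0$: any solution $g$ of the initial value problem yields the same value of $\int g\,\ud F$.

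Next, localize. Let $\tau_n:=\tau\wedge n\wedge\inf\{t: X_t\notin[a_n,b_n]\}$ with $a_n\downarrow\ell$, $b_n\uparrow\varrho$. On $[a_n,b_n]$ the integrand $g'\sigma$ is locally square integrable in the required sense, so the stochastic integral stopped at $\tau_n$ is a true martingale with mean zero. This gives $\E[\tau_n]=\E[g(X_{\tau_n})]$. Letting $n\to\infty$, monotone convergence yields $\E[\tau_n]\uparrow\E[\tau]$. Since $\tau<\infty$ a.s. and the embedding is a.s. finite with no exit from the interior, $X_{\tau_n}\to X_\tau$ a.s. Hence $g(X_{\tau_n})\to g(X_\tau)$, whose expectation equals $\int g(z)\,F(\ud z)$ because $X_\tau\sim F$.

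The main obstacle is justifying $\E[g(X_{\tau_n})]\to\E[g(X_\tau)]$, i.e.\ uniform integrability of $\{g(X_{\tau_n})\}$. This is where the hypothesis ``if finite'' enters. My first attempt would use the fact that the martingale $N_t=g(X_{t\wedge\tau})-t\wedge\tau$ has quadratic variation $\int_0^{\tau}(g'\sigma)^2(X_s)\,\ud s$. Alternatively, note that $g$ differs from a convex function in the natural scale only by a multiple of $\varsigma$: $g\circ\varsigma^{-1}$ is convex, since $\mathcal{L}g=1>0$ means $g$ is convex in scale coordinates. With $\E[\tau]<\infty$, one obtains that $\varsigma(X_{t\wedge\tau})$ is a uniformly integrable martingale for the Az\'ema--Yor/Pedersen--Peskir embedding, via its minimality, \cite{PP01,CH06}. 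Then $g(X_{\tau_n})$ is dominated by a conditional expectation argument: by Jensen, $g(X_{\tau_n})\le\E[g(X_\tau)\mid\mathcal{F}_{\tau_n}]+$ a term controlled by $\tau$. Combining this with Fatou's lemma for the lower bound ($g$ is bounded below in scale coordinates up to the linear term $c\,\varsigma$, which is uniformly integrable) gives the two inequalities $\int g\,\ud F\le\E[\tau]$ and $\E[\tau]\le\int g\,\ud F$, and hence equality.
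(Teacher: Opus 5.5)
Your proposal takes the same route as the paper, namely Dynkin's formula for $g$ with $Lg\equiv 1$ followed by $X_\tau\sim F$. It is more careful than the paper, which cites Dynkin's formula without checking that it applies to the unbounded $g$; your localization and uniform-integrability argument closes that step correctly. That argument uses convexity of $g\circ\varsigma^{-1}$, uniform integrability of $\varsigma(X_{\cdot\wedge\tau^*})$ by minimality, conditional Jensen for $\E[\tau]\le\int g\,\ud F$ (no extra term is needed), and Fatou for the reverse inequality. Your restriction to the minimal embedding $\tau^*$ is essential rather than cosmetic: take $\mu\equiv 0$ and $\sigma(z)=(1+z^2)^{p/2}$ with $p>1/2$; running from $x$ to $x+1$ and back to $x$ embeds $\delta_x$ with $0<\E[\tau]<\infty$, while $\int g\,\ud\delta_x=g(x)=0$.
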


\begin{proof}
	Let
	\[
	L = \mu(z) \frac{\ud}{\ud z} + \frac{\sigma^2(z)}{2}\frac{\ud^2}{\ud z^2},
	\]
	then $L$ is the generator of the Markov process $X$, and for all functions $f$ in its domain we have Dynkin's formula (see \cite[Section 7.4]{Oks03})
	\[
	\E\bigl[f\bigl(X_\tau\bigr)\bigr] - f(x) = \E \biggl[ \int_0^\tau \bigl(Lf\bigr)(X_s) \, \ud s\biggr].
	\]
	Now as $g$ is the solution to the ODE \eqref{Dynkin-ODE}, the integrand is just 1 and thus
	\[
\E[\tau] = \E\bigl[g\bigl(X_\tau\bigr)\bigr] = \int g(z) \, F(\ud z).
\]
\end{proof}

\begin{remark}
	It might look curious that the ODE \eqref{Dynkin-ODE} seems under-determined (a second-order ODE with a single initial condition). But one checks that for the case $ m_F^\varsigma =0$, the scale function $\varsigma$ is a solution to the \emph{homogeneous} ODE $\mu(z) g'(z) + \sigma^2(z)/2g''(z) =0$, $g(x)=0$, and thus once a solution $g$ is found, all solutions are given by $g(z) + \lambda \varsigma(z)$, $\lambda \in \mathbb{R}$ as $\int \varsigma(z) \, F(\ud z)=0$ and it does not matter which of the solutions we choose.
\end{remark}

To answer the question of which distributions can be achieved in a given bounded time horizon, we can rely on the answer given in \cite[Theorem 6]{AHS15}.

\begin{proposition}
	Assume that the volatility of the diffusion is positive, differentiable and the function
	\[
	z \mapsto \sigma'(z) -\frac{2\mu(z)}{\sigma(z)} 
	\]
	is non-increasing. If the distribution $F$ satisfies $\int \varsigma(z) F(\ud z) = 0$ and $g := F^{-1} \circ \Phi$ is absolutely continuous, and 
	\[
	\frac{g'(z)}{\sigma\bigl(g(z)\bigr)} \leq \sqrt{T}.
	\]
    then there exists a stopping time $\tau 
\leq T$ such that $X_\tau \sim F$.
\end{proposition}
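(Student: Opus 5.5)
The plan is to reduce the statement to the Ankirchner--Hobson--Strack result \cite[Theorem 6]{AHS15} on bounded Skorokhod embeddings into a Brownian motion. We first write the diffusion as a time change of a Brownian motion after a change of variables that makes the volatility constant. Then we check that the stated monotonicity assumption is exactly the structural hypothesis required in \cite{AHS15}.

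\textbf{Step 1: Lamperti-type transform.} Define $h(z)=\int_x^z \frac{\ud u}{\sigma(u)}$ and set $Y_t=h(X_t)$. By It\^o's formula,
\[
\ud Y_t = \Bigl(\frac{\mu(X_t)}{\sigma(X_t)} - \frac{\sigma'(X_t)}{2}\Bigr)\ud t + \ud W_t
= -\tfrac12\, b(X_t)\,\ud t + \ud W_t, \qquad b(z):=\sigma'(z)-\frac{2\mu(z)}{\sigma(z)}.
\]
Since $h$ is increasing, $-\tfrac12 b\circ h^{-1}$ is non-decreasing in $y$ exactly when $b$ is non-increasing in $z$. So $Y$ is a Brownian motion with a drift that is monotone in the state. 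This is the setting in which \cite[Theorem 6]{AHS15} (via a comparison or coupling with the driftless case) guarantees a bounded embedding. It does so under a Lipschitz-type condition on the quantile map of the target relative to the normal quantile.

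\textbf{Step 2: Translate the target and the condition.} The target for $Y$ is $G=F\circ h^{-1}$, whose quantile map against $\Phi$ is $\tilde g=h\circ g$. Hence $\tilde g'(z)=g'(z)/\sigma(g(z))$, and the hypothesis says precisely that $\tilde g$ is $\sqrt T$-Lipschitz. This is the condition in \cite{AHS15}: for a Brownian motion, a distribution with a $\sqrt T$-Lipschitz quantile function, suitably centred, is the law of $B_T$ composed with a monotone contraction. It can therefore be embedded before time $T$, for instance by a Root/Rost-type or time-reversal construction. The centring condition there corresponds to $\int\varsigma\,\ud F=0$, since $\varsigma$ is the scale function of $X$ and hence also of $Y$ after composing with $h^{-1}$. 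Theorem \ref{thm:sko-gen} then guarantees that an (unbounded) embedding exists at all.

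\textbf{Step 3: Conclude.} If $\tau\le T$ embeds $G$ into $Y$, then $X_\tau=h^{-1}(Y_\tau)\sim F$, which gives the claim. The main obstacle is Step 2: matching the exact form of the hypotheses of \cite[Theorem 6]{AHS15}. In particular, we must check that their drift-monotonicity requirement has the right sign; a mismatch would force us to reflect $y\mapsto -y$ and replace $g$ by the reversed quantile. We must also handle boundary/explosion issues when $h(I)\neq\R$. For the latter, we would argue that the Lipschitz quantile keeps the embedding inside the support of $F\subset I$ before time $T$, so the absorbing boundaries are never reached.
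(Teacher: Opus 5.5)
Your plan is essentially the paper's: the paper gives no argument beyond invoking \cite[Theorem 6]{AHS15}. That theorem is already a statement about the diffusion \eqref{diffusion} under exactly these hypotheses. So your Lamperti change of variables, which as you observe leaves the monotonicity, quantile and centring conditions invariant, is harmless but unnecessary, and the sign-matching and boundary worries do not arise.

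You should drop the speculation about how that theorem handles the drift. Non-increasingness of $\sigma'-2\mu/\sigma$ means precisely that the natural-scale volatility $(\varsigma'\sigma)\circ\varsigma^{-1}$ is concave, since its derivative is $(\sigma'-2\mu/\sigma)\circ\varsigma^{-1}$. That concavity is what lets the Bass-type bound $g'/\sigma(g)\le\sqrt T$ at the terminal time propagate to earlier times; no coupling with the driftless case is involved. Your fallback of reflecting $y\mapsto -y$ could never repair a sign mismatch either, because reflection sends $\sigma'-2\mu/\sigma$ to $z\mapsto-(\sigma'-2\mu/\sigma)(-z)$, which has the same monotonicity.
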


\section{Conclusion}\label{sec:conc}

The article presented a novel method to determine the optimal time to sell an asset, taking into  account the seller's individual preferences using the distribution builder methodology introduced by Sharpe, Goldstein and Blythe \cite{SGB00}. We treat the problem first for geometric Brownian motion (the model used in most of the literature on optimal timing) as well as on general one-dimensional diffusion models, a close connection to the Skorokhod problem of embedding distributions into diffusion processes via optimal stopping is uncovered. The main extension is that we are not only interested in which distributions can be achieved by stopping the diffusion process, but also those that can be dominated (in the sense of first order stochastic dominance).

We provide an explicit strategy for selling the asset based on the Az\'{e}ma--Yor construction of the Skorokhod embedding, characterize in which cases it is optimal to sell immediately or hold the asset indefinitely, and determine the expected time of liquidation. In the case of a geometric Brownian motion and two-parameter families of distributions, we show that the problem gives rise to a mean--variance trade-off, that higher expected returns can be achieved at the price of higher variance, and show how this can take different forms depending on specific target distributions.

\bibliographystyle{alpha}
\bibliography{PC_bib}
 
 \appendix

\section{Further Examples}

We give several further explicit examples besides the classical geometric Brownian motion discussed at length in Section \ref{sec:GBM}.  

\subsection{Drifted Brownian Motion}

We consider the cases where the discounted asset price process $X$ follows a drifted Brownian motion 
\[
\ud X_t = \mu \, \ud t + \sigma \, \ud W_t, \qquad X_0 = x.
\]
In this case, it follows directly from \cite{GF00} that a distribution $F$ is attainable if and only if $\int e^{-\frac{2\mu (z-x)}{\sigma^2}} \, F(\ud z) \leq 1$. By the same arguments as in the geometric Brownian motion case, if $\mu \geq 0$, all distributions are super-attainable, and if $\mu < 0$ only those with $\int e^{-\frac{2\mu (z-x)}{\sigma^2}} \, F(\ud z) \leq 1$. However, as the latter implies that the mean of $F$ has to be smaller than $x$, quite distinctly here the model splits just into two cases: there is no reason to sell if $\mu \geq 0$ and immediate sale is suggested if $\mu<0$.

\subsection{(Exponential) Ornstein--Uhlenbeck Process}

We consider the cases where the discounted asset price process $X$ follows an Ornstein--Uhlenbeck process 
\[
\ud X_t = \kappa (\lambda - X_t) \, \ud t + \sigma \, \ud W_t, \qquad X_0 = x,
\]
with $x, \lambda \in \R$, $\kappa, \sigma >0$. In this case the scale function is
\[
\varsigma(z) = e^{-\frac{\kappa(x-\lambda)^2}{\sigma^2}} \int_x^z e^{\frac{\kappa(y-\lambda)^2}{\sigma^2}} \, \ud y
\]
and thus $\varsigma(\pm \infty ) = \pm \infty$. It follows that all distributions are attainable and a fortiori super-attainable. The same is true for the exponential Ornstein--Uhlenbeck process often used in financial modeling: by an exponential transform one sees that all distributions with positive support are attainable and therefore all distributions (on the real line) are super-attainable. In both models one would never sell the asset.

\subsection{CIR Process}

In the case where $X$ follows a CIR process on $[0,\infty)$ for the discounted asset price
\[
\ud X_t = \kappa (\lambda - X_t) \, \ud t + \sigma \sqrt{X_t}\, \ud W_t, \qquad X_0 = x,
\]
with parameters $x, \kappa, \lambda, \sigma > 0$, the scale function is
\[
\varsigma(z) =  \int_x^z  \Bigl(\frac{y}{x}\Bigr)^{-\frac{2\kappa \lambda}{\sigma^2}} e^{\frac{2\kappa}{\sigma^2}(y-x)}\, \ud y 
\]
and thus $\varsigma(0) = - \infty$ if $2\kappa \lambda \geq \sigma^2$ and $\varsigma(0) > - \infty$ if $2\kappa \lambda < \sigma^2$ while $\varsigma(\infty) =\infty$ always. It follows that in the case that $\kappa >0$ and the Feller condition $2\kappa \lambda \geq \sigma^2$ holds all distributions are attainable. However, if the Feller condition does not hold, the process reaches $0$ in finite time with positive probability and by our conventions is absorbed at the boundary; only distributions that satisfy 
\[
\int_0^\infty \int_x^z \Bigl(\frac{x}{y}\Bigr)^{\frac{2\kappa \lambda}{\sigma^2}} e^{\frac{2\kappa}{\sigma^2}(y-x)} \, \ud y \, F(\ud z) \leq 0
\]
can be attained. This can be simplified to
\[
\int_0^\infty y^{-\frac{2\kappa \lambda}{\sigma^2}} e^{\frac{2\kappa}{\sigma^2}y}\bigl(F(y-) - F_{\delta_x}(y)\bigr) \, \ud y \geq 0.
\]

\end{document}